\title{The density of expected persistence diagrams and its kernel based estimation}
\author{Fr\'ed\'eric Chazal}
\address{Inria Paris-Saclay}
\email{frederic.chazal@inria.fr}
\author{Vincent Divol}
\address{Inria Paris-Saclay and Universit\'e Paris-Sud}
\email{vincent.divol@inria.fr}
\theoremstyle{plain}
\newtheorem{theorem}{Theorem}[section]
\newtheorem{proposition}[theorem]{Proposition}
\newtheorem{lemma}[theorem]{Lemma}
\newtheorem{definition}[theorem]{Definition}
\theoremstyle{remark}
\newtheorem{remark}[theorem]{Remark}
\newcommand{\R}{\mathbb{R}}
\newcommand{\X}{\mathbb{X}}
\newcommand{\FF}{\mathcal{F}}
\newcommand{\KK}{\mathcal{K}}
\newcommand{\NN}{\mathcal{N}}
\newcommand{\defeq}{\vcentcolon=}
\newcommand{\eqdef}{=\vcentcolon}
\newcommand{\dd}{\mathrm{d}}
\DeclareMathOperator*{\diam}{diam}
\newcommand{\rD}{\textbf{r}}
\newenvironment{claim}[1]{\par\noindent\underline{Claim:}\space#1}{}
\newenvironment{claimproof}[1]{\par\noindent\underline{Proof:}\space#1}{\hfill $\blacksquare$}
\begin{document}
\raggedbottom
\begin{abstract}
Persistence diagrams play a fundamental role in Topological Data Analysis where they are used as topological descriptors of filtrations built on top of data. They consist in discrete multisets of points in the plane $\R^2$ that can equivalently be seen as discrete measures in $\R^2$. When the data is assumed to be random, these discrete measures become random measures whose expectation is studied in this paper. First, we show that for a wide class of filtrations, including the \v Cech and Rips-Vietoris filtrations, but also the sublevels of a Brownian motion, the expected persistence diagram, that is a deterministic measure on $\R^2$, has a density with respect to the Lebesgue measure. Second, building on the previous result we show that the persistence surface recently introduced in \cite{adams2017persistence} can be seen as a kernel estimator of this density. We propose a cross-validation scheme for selecting an optimal bandwidth, which is proven to be a consistent procedure to estimate the density.
\end{abstract}

\maketitle

\section{Introduction}
Persistent homology (see \cite{em-ph-17} for a review), a popular approach in Topological Data Analysis (TDA), provides efficient mathematical and algorithmic tools to understand the topology of some dataset (e.g. a point cloud or a time-series) by tracking the evolution of its homology at different scales. For instance, given a scale (or time) parameter $r$ and a point cloud $x = (x_1,\dots,x_n)$ of size $n$, 
a simplicial complex $\KK(x,r)$ is built on $\{1,\dots,n\}$ thanks to some procedure, such as, e.g., the nerve of the union of balls of radius $r$ centered on the point cloud or the Vietoris-Rips complex. 
Letting the scale $r$ increase gives rise to an increasing sequence of simplicial complexes $\KK(x)=(\KK(x,r))_r$ called a {\em filtration}. When a simplex is added in the filtration at a time $r$, it either "creates" or "fills" some hole in the complex. 
Persistent homology keeps track of the birth and death of these holes and encodes them as a {\em persistence diagram} that can be seen as a relevant and stable  multi-scale topological descriptor of the data (see \cite{ccgmo-ghsssp-09,cso-psgc-13}). Similarly, one can create a filtration by considering the sublevel sets $\KK(f) = (f^{-1}(]-\infty,r]))_r$ of a given continuous real-valued function $f$ and one can track the evolution of the homology of the sublevels with very few requirements on the function $f$ (see \cite[Section 3.9]{chazal2016structure}).
A persistence diagram $D_s$ is thus a collection of pairs of numbers, each of those pairs corresponding to the birth time and the death time of a $s$-dimensional hole. A precise definition of persistence diagram can be found, for example, in \cite{em-ph-17,chazal2016structure}. Mathematically, a diagram is a multiset of points in
\begin{equation}
\Delta \defeq \{\rD = (r_1,r_2),\ r_1 <r_2 < \infty\}.
\end{equation} 
Note that in a general setting, points $\rD=(r_1,r_2)$ in diagrams can be "at infinity" on the line $\{r_2=\infty\}$ (e.g. a hole may never disappear). However, in the cases considered in this paper, this will be the case for a single point for $0$-dimensional homology, and this point will simply be discarded in the following.

In statistical settings, one is often given a (i.i.d.) sample of random datasets (either point clouds or functions in this paper) $\X_1,\dots,\X_N$ and filtrations $\KK(\X_1), \dots, \KK(\X_N)$ built on top of them. We consider the set of persistence  diagrams $D_s[\KK(\X_1)],\dots,D_s[\KK(\X_N)]$, which are thought to contain relevant topological information about the geometry of the underlying phenomenon generating the datasets. 
The space of persistence diagrams is naturally endowed with the so-called \emph{bottleneck distance} \cite{cohen2007stability} or some variants. However, the resulting metric space turns out to be highly non linear, making the statistical analysis of distributions of persistence diagrams rather awkward, despite several interesting results such as, e.g., \cite{turner2014frechet,balakrishnan2013statistical, chazal2014optimal}. A common scheme to overcome this difficulty is to create easier to handle statistics by mapping the diagrams to a vector space thanks to a feature map $\Psi$, also called a representation (see, e.g., \cite{adams2017persistence, biscio2016accumulated, bubenik2015statistical, chazal2014stochastic, chen2015statistical, kusano2016persistence, reininghaus2015stable}). A classical idea to get information about the typical behavior of an observation is then to estimate the expectation $E[\Psi(D_s[\KK(\X_i)])]$ of the distribution of representations using the mean representation 
\begin{equation}
\overline{\Psi}_N \defeq \frac{\sum_{i=1}^N \Psi(D_s[\KK(\X_i)])}{N}.
\end{equation}
In this direction, \cite{bubenik2015statistical} introduces a representation called persistence landscape, and shows 
that it satisfies law of large numbers and central limit theorems. Similar theorems can be shown for a wide variety of representations: it is known that $\overline{\Psi}_N$ is a consistent estimator of $E[\Psi(D_s[\KK(\X_i)])]$. 
Although it may be useful for a classification task, this mean representation is still somewhat disappointing from a theoretical point of view. Indeed, what exactly $E[\Psi(D_s[\KK(\X_i)])]$ is, has been scarcely studied in a non-asymptotic setting, i.e.~when the cardinality of the random point cloud $\X_i$ is fixed or bounded.

When the observed data $\X_i$s are large point clouds, asymptotic results are well understood for some non-persistent descriptors of the data, such as the Betti numbers: a natural question in geometric probability is to study the asymptotics of the $s$-dimensional Betti numbers $\beta_s(\KK(\X_n,r_n))$ where $\X_n$ is a point cloud of size $n$ and under different asymptotics for $r_n$. Notable results on the topic include \cite{kahle2013limit, yogeshwaran2015topology, Yogeshwaran2017}. Considerably less results are known about the asymptotic properties of fundamentally persistent descriptors of the data: \cite{bobrowski2017maximally} finds the right order of magnitude of maximally persistent cycles and \cite{duy2016limit} shows the convergence of persistence diagrams on stationary process in a weak sense.

\paragraph*{Contributions of the paper.} In this paper, representing persistence diagrams as discrete measures, i.e.~as element of the space of measures on $\R^2$, we establish non-asymptotic global properties of various representations and persistence-based descriptors.  A multiset of points is naturally in bijection with the discrete measure defined on $\R^2$ created by putting Dirac measures on each point of the multiset, with mass equal to the multiplicity of the point. In this paper a persistence diagram $D_s$ is thus represented as a discrete measure on $\Delta$ and with a slight abuse of notation, we will write 
\begin{equation}
D_s = \sum_{\rD \in D_s} \delta_{\rD},
\end{equation}
where $\delta_{\rD}$ denotes the Dirac measure in $\rD$ and where, as mentioned above, points with infinite persistence are simply discarded.
A wide class of representations, including the persistence surface \cite{adams2017persistence} (variants of this object have been also introduced \cite{chen2015statistical, kusano2016persistence, reininghaus2015stable}), the accumulated persistence function \cite{biscio2016accumulated} or persistence silhouette \cite{chazal2014stochastic} are conveniently expressed as $\Psi(D_s) = D_s(f) \defeq \sum_{\rD \in D_s} f(\rD)$ for some function $f$ on $\Delta$. Such representations, having particularly good theoretical properties, will be called \emph{linear representations}. Given a random set of points $\X$, the expected behavior of the linear representations $E[D_s[\KK(\X)](f)]$ is well understood if the expectation $E[D_s[\KK(\X)]]$ of the distribution of persistence diagrams is understood, where the expectation $E[\mu]$ of a random discrete measure $\mu$ is defined by the equation $E[\mu](B) = E[\mu(B)]$ for all Borel sets $B$ (see \cite{ledoux2013probability} for a precise definition of $E[\mu]$ in a more general setting). Our main contributions consists in showing that for two different kind of situations (e.g. filtrations built on point clouds in Theorem \ref{thm:main_thm} or filtration built with the sublevel sets of a Brownian motion in Theorem \ref{thm:brownian}), the expected persistence diagram $E[D_s[\KK(\X)]]$, which is a measure on $\Delta \subset \R^2$, has a density $p$ with respect to the Lebesgue measure on $\R^2$.  Therefore, $E[\Psi(D_s[\KK(\X)])]$ is equal to $\int pf$, and if properties of the density $p$ are shown (such as smoothness), those properties will also apply to the expectation of the representation $\Psi$. Note that Theorem \ref{thm:brownian} is, to our knowledge, one of the first result about the \emph{persistent} homology of Gaussian random fields. 

The main argument of the proof of Theorem \ref{thm:main_thm} relies on the basic observation that for point clouds $\X$ of given size $n$, the filtration $\KK(\X)$ can induce a finite number of ordering configurations of the simplices. The core of the proof consists in showing that, under suitable assumptions, this ordering is locally constant for almost all $\X$. As one needs to use geometric arguments, having properties only satisfied almost everywhere is not sufficient for our purpose. One needs to show that properties hold in a stronger sense, namely that the set on which it is satisfied is a dense open set. Hence, a convenient framework to obtain such properties is given by subanalytic geometry (see \cite{shiota1997geometry} for a monograph on the subject). Subanalytic sets are a class of subsets of $\R^d$ that are locally defined as linear projections of sets defined by analytic equations and inequations. As most considered filtrations in Topological Data Analysis result from real algebraic constructions, such sets naturally appear in practice. 
On open sets where the combinatorial structure of the filtration is constant, the way the points in the diagrams are matched to pairs of simplices is fixed: only the times/scales at which those simplices appear change. Under an assumption of smoothness of those times, and using the coarea formula \cite[Chapter 3]{morgan2016geometric}, a classical result of geometric measure theory generalizing the change of variables formula in integrals, one then deduces the existence of a density for $E[D_s[\KK(\X)]]$.

Among the different linear representations, persistence surface is of particular interest. It is defined as the convolution of a diagram with a gaussian kernel. Hence, the mean persistence surface can be seen as a kernel density estimator of the density $p$ of Theorem \ref{thm:main_thm}. As a consequence, the general theory of kernel density estimation applies and gives theoretical guarantees about various statistical procedures. As an illustration, we consider the bandwidth selection problem for persistence surfaces. Whereas authors in \cite{adams2017persistence} state that any reasonable bandwidth is sufficient for a classification task, we give arguments for the opposite when no "obvious" shapes appear in the diagrams. We then propose a cross-validation scheme to select the bandwidth matrix. The consistency of the procedure is shown using Stone's theorem \cite{stone1984asymptotically}. This procedure is implemented on a set of toy examples illustrating its relevance.

The paper is organized as follow: Section \ref{sec:preli} is dedicated to the necessary background in geometric measure theory and subanalytic geometry. Results are stated in Section \ref{sec:statements}, and Theorem \ref{thm:main_thm} is proved in Section \ref{sec:proof}. It is shown in Section \ref{sec:examples} that the main result applies to the \v Cech and Rips-Vietoris filtrations. Section \ref{sec:brownian} deals with the study of the persistence diagram of the Brownian motion whereas Section \ref{sec:stability} provides elements to understand the stability of the expected persistence diagrams with respect to the measure generating them. Section \ref{sec:kde} is dedicated to the statistical study of persistence surface, and numerical illustrations are found in Section \ref{sec:num}. All the technical proofs that are not essential to the understanding of the idea and results of the paper have been moved to the Appendix.

\section{Preliminaries}\label{sec:preli}
\subsection{The coarea formula}
The proof of the existence of the density of the expected persistence diagram depends heavily on a classical result in geometric measure theory, the so-called coarea formula (see \cite[Chapter 3]{morgan2016geometric} for a gentle introduction to the subject). It consists in a more general version of the change of variables formula in integrals. Let $(M,\rho)$ be a metric space. The diameter of a set $A\subset (M,\rho)$ is defined by $\sup_{x,y\in A} \rho(x,y)$.

\begin{definition} Let $k$ be a non-negative integer. For $A\subset M$, and $\delta >0$, consider 
\begin{equation}
\mathcal{H}_k^\delta(A) \defeq \inf\left\{ \sum_i \alpha(k) \left(\frac{\diam(U_i)}{2}\right)^k, \ A\subset \ \bigcup_i U_i \mbox{ and } \diam(U_i)<\delta\right\},
\end{equation}
where $\alpha(k)$ is the volume of the $k$-dimensional unit ball. The \emph{$k$-dimensional Hausdorff measure} on $M$ of $A$ is defined by $\mathcal{H}_k(A) \defeq \lim_{\delta\to 0} \mathcal{H}_k^\delta(A)$.
\end{definition}
If $M$ is a $d$-dimensional submanifold of $\R^D$, the $d$-dimensional Hausdorff measure coincides with the volume form associated to the ambient metric restricted to $M$. For instance, if $M$ is an open set of $\R^D$, the Hausdorff measure is the $D$-dimensional Lebesgue measure.

\begin{theorem}[Coarea formula \cite{morgan2016geometric}]\label{thm:coarea} Let $M$ (resp. $N$) be a smooth Riemannian manifold of dimension $m$ (resp $n$). Assume that $m\geq n$ and let $\Phi: M\to N$ be a differentiable map. Denote by $D\Phi$ the differential of $\Phi$. The Jacobian of $\Phi$ is defined by $J\Phi = \sqrt{\det((D\Phi)\times (D\Phi)^t)}$. For $f : M\to \R_+$ a positive measurable function, the following equality holds:
\begin{equation}
\int_M f(x) J\Phi(x) \dd\mathcal{H}_m(x) = \int_N \left(\int_{x\in \Phi^{-1}(\{y\})} f(x) \dd \mathcal{H}_{m-n}(x)\right) \dd\mathcal{H}_n(y).
\end{equation}
\end{theorem}

In particular, if $J\Phi>0$ almost everywhere, one can apply the coarea formula to $f\times(J\Phi)^{-1}$ to compute $\int_M f$. Having $J\Phi>0$ is equivalent to have $D\Phi$ of full rank: most of the proof of our main theorem consists in showing that this property holds for certain functions $\Phi$ of interest.

\subsection{Background on subanalytic sets}
We now give basic results on subanalytic geometry, whose proofs are given in Appendix. See \cite{shiota1997geometry} for a thorough review of the subject. Let $M\subset \R^D$ be a connected real analytic submanifold, possibly with boundary, whose dimension is denoted by $d$.

\begin{definition}
A subset $X$ of $M$ is \emph{semianalytic} if each point of $M$ has a neighbourhood $U\subset M$ such that $X \cap U$ is of the form 
\begin{equation}
\bigcup_{i=1}^p\bigcap_{j=1}^q X_{ij},
\end{equation}
where $X_{ij}$ is either $f_{ij}^{-1}(\{0\})$ or $f_{ij}^{-1}((0,\infty))$ for some analytic functions $f_{ij} : U\to \R$.
\end{definition}

\begin{definition}
A subset $X$ of $M$ is \emph{subanalytic} if for each point of $M$, there exists a neighborhood $U$ of this point, a real analytic manifold $N$ and $A$, a relatively compact semianalytic set of $N\times M$, such that $X\cap U$ is the projection of $A$ on $M$. A function $f: X\to \R$ is subanalytic if its graph is subanalytic in $M \times \R$. The set of real-valued subanalytic functions on $X$ is denoted by $\mathcal{S}(X)$.
\end{definition}
 A point $x$ in a subanalytic subset $X$ of $M$ is smooth (of dimension $k$) if, in some neighbourhood of $x$ in $M$, $X$ is an analytic submanifold (of dimension $k$). The maximal dimension of a smooth point of $X$ is called the dimension of $X$. The smooth points of $X$ of dimension $d$ are called regular, and the other points are called singular. The set $\mbox{Reg}(X)$ of regular points of $X$ is an open subset of $M$, possibly empty; the set of singular points is denoted by $\mathrm{Sing}(X)$. 

\begin{restatable}{lemma}{firstProp}
\label{lem:basic_analytic} \begin{enumerate}
\item[(i)] For $f\in \mathcal{S}(M)$, the set $A(f)$ on which $f$ is analytic is an open subanalytic set of $M$. Its complement is a subanalytic set of dimension smaller than $d$.
\end{enumerate}
Fix $X$ a subanalytic subset of $M$. Assume that $f,g: X\to \R$ are subanalytic functions such that the image of a bounded set is bounded. Then,
\begin{enumerate}
\item[(ii)] The functions $fg$ and $f+g$ are subanalytic.
\item[(iii)] The sets $f^{-1}(\{0\})$ and $f^{-1}((0,\infty))$ are subanalytic in $M$.
\end{enumerate}
\end{restatable}

As a consequence of point (i), for $f \in \mathcal{S}(M)$, one can define its gradient $\nabla f$  everywhere but on some subanalytic set of dimension smaller than $d$.
\begin{restatable}{lemma}{usefulLem}
\label{usefulLem} Let $X$ be a subanalytic subset of $M$. If the dimension of $X$ is smaller than $d$, then $\mathcal{H}_d(X)=0$.
\end{restatable}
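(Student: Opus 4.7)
The plan is to reduce the lemma to a standard fact about smooth submanifolds by invoking a stratification result for subanalytic sets: any subanalytic subset $X$ of $M$ admits a locally finite decomposition $X = \bigsqcup_i S_i$ into disjoint analytic submanifolds $S_i \subset M$ (the \emph{strata}), each of dimension at most $\dim X$. This is one of the foundational results of subanalytic geometry, recorded in \cite{shiota1997geometry}.

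First, I would apply this stratification. Since $M$, being a submanifold of $\R^D$, is $\sigma$-compact, any locally finite family of subsets is countable, so we obtain countably many strata $S_i$. By assumption $\dim X < d$, so every $S_i$ is an analytic submanifold of $M$ of dimension $k_i < d$.

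Next, I would establish the elementary fact that any analytic submanifold $S$ of $\R^D$ of dimension $k < d$ satisfies $\mathcal{H}_d(S) = 0$. By a countable cover of $S$ by charts (again using $\sigma$-compactness), this reduces to bounding the $d$-dimensional Hausdorff measure of the image of a bounded open subset $U \subset \R^k$ under a smooth, hence locally Lipschitz, map. Covering $U$ by $N = O(\delta^{-k})$ cubes of side $\delta$ produces a cover of the image by $N$ sets of diameter $O(\delta)$, yielding $\mathcal{H}_d^\delta \le C N \delta^d = O(\delta^{d-k}) \to 0$ as $\delta \to 0$. Countable subadditivity of $\mathcal{H}_d$ then gives $\mathcal{H}_d(X) \le \sum_i \mathcal{H}_d(S_i) = 0$.

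The main obstacle is the stratification theorem itself: asserting that every subanalytic set decomposes into smooth subanalytic strata of dimensions bounded by $\dim X$ is non-trivial and relies on the stability of subanalyticity under complements and projections. However, this is a standard black-box result from \cite{shiota1997geometry}, and once it is imported the rest of the argument is routine geometric measure theory. An alternative, if one prefers to avoid citing a full stratification theorem, is an induction on $\dim X$ in which the top-dimensional smooth locus of $X$ is peeled off as an analytic submanifold while the remainder is checked to be subanalytic of strictly smaller dimension; this variant still rests on the same closure properties of subanalytic sets.
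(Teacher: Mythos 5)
Your proof is correct, but it goes through a different black box than the paper does. You invoke the stratification theorem: every subanalytic set decomposes into a locally finite (hence, by $\sigma$-compactness of $M$, countable) family of analytic strata of dimension at most $\dim X$, and you then kill each stratum by the elementary covering estimate $\mathcal{H}_d^{O(\delta)} \le C\,\delta^{-k}\,\delta^d \to 0$ for a $k$-dimensional submanifold with $k<d$, finishing with countable subadditivity. The paper instead first replaces $X$ by its closure (using $\mathcal{H}_d(\overline{X})\ge\mathcal{H}_d(X)$ and the fact that closure preserves subanalyticity and dimension) and then applies the uniformization result from Section I.2.1 of Shiota's book: a closed subanalytic set of dimension $k$ is the image $\Psi(N)$ of a proper real analytic map from a $k$-dimensional analytic manifold $N$. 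Exhausting $X$ by compacta $X_K=\Psi(\Psi^{-1}(X_K))$, it uses that $\Psi$ is Lipschitz on the compact set $\Psi^{-1}(X_K)$ and that $\mathcal{H}_d\equiv 0$ on the $k$-dimensional manifold $N$, so each $X_K$ is $\mathcal{H}_d$-null. Both routes reduce to the same measure-theoretic fact (Lipschitz images of sets of dimension $k<d$ are $\mathcal{H}_d$-null); yours trades the uniformization theorem for the stratification theorem, which is arguably a heavier import but avoids the passage to the closure and the implicit fact that closure does not raise dimension, while the paper's argument needs only a single global parametrization plus a compact exhaustion. One small point to tighten in your write-up: smoothness of a chart parametrization gives only local Lipschitzness, so in the covering step you should restrict to precompact pieces of each chart (or exhaust by compacta, as the paper does) before quoting a uniform Lipschitz constant; this is routine and does not affect the argument.
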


As a direct corollary, we always have 
\begin{equation}\label{regFull}
\mathcal{H}_d(X) = \mathcal{H}_d(\mbox{Reg}(X)).
\end{equation}
Write $\NN(M)$ the class of subanalytic subsets $X$ of $M$ with $\mbox{Reg}(X)=\emptyset$. We have just shown that $\mathcal{H}_d \equiv 0$ on $\NN(M)$. They form a special class of negligeable sets. We say that a property is verified \emph{almost subanalytically everywhere} (a.s.e.) if the set on which it is not verified is included in a set of $\NN(M)$. For example, Lemma \ref{lem:basic_analytic} implies that $\nabla f$ is defined a.s.e..

\section{The density of expected persistence diagrams}\label{sec:statements}
Let $n>0$ be an integer. Write $\FF_n$ the collection of non-empty subsets of $\{1,\dots,n\}$. Let $\varphi=(\varphi[J])_{J\in \FF_n}:M^n \to \R^{\FF_n}$ be a continuous function. The function $\varphi$ will be used to construct the persistence diagram and is called a \emph{filtering function}: a simplex $J$ is added in the filtration at the time $\varphi[J]$. Write for $x=(x_1,\dots,x_n)\in M^n$ and for $J$ a simplex, $x(J)\defeq (x_j)_{j\in J}$. We make the following assumptions on $\varphi$:
\begin{enumerate}
	\item[(K1)] \emph{Absence of interaction:} For $J\in \mathcal{F}_n$, $\varphi[J](x)$ only depends on $x(J)$.
	\item[(K2)] \emph{Invariance by permutation:} For $J\in \mathcal{F}_n$ and for $(x_1,\dots,x_n)\in M^n$, if $\tau$ is a permutation of $\{1,\dots,n\}$ whose support is included in $J$, then $\varphi[J](x_{\tau(1)},\dots,x_{\tau(n)})=\varphi[J](x_1,\dots,x_n)$.
	\item[(K3)] \emph{Monotony:} For $J \subset J' \in \FF_n$, $\varphi[J] \leq \varphi[J']$.
	\item[(K4)] \emph{Compatibility:} For a simplex $J \in \FF_n$ and for $j\in J$, if $\varphi[J](x_1,\dots,x_n)$ is not a function of $x_j$ on some open set $U$ of $M^n$, then  $\varphi[J] \equiv \varphi[J\backslash\{j\}]$ on $U$.
	\item[(K5)] \emph{Smoothness:} The function $\varphi$ is subanalytic and the gradient of each of its entries (which is defined a.s.e.) is non vanishing a.s.e..
	
\end{enumerate}
Assumptions (K2) and (K3) ensure that a filtration $\KK(x)$ can be defined thanks to $\varphi$ by:
\begin{equation}
\forall J \in \mathcal{F}_n,\ J \in \KK(x,r) \Longleftrightarrow \varphi[J](x)\leq r.
\end{equation}
Assumption (K1) means that the moment a simplex is added in the filtration only depends on the position of its vertices, but not on their relative position in the point cloud. 
For $J\in \mathcal{F}_n$, the gradient of $\varphi[J]$ is a vector field in $TM^n$. Its projection on the $j$th coordinate is denoted by $\nabla^j \varphi[J]$: it is a vector field in $TM$ defined a.s.e.. The persistence diagram of the filtration $\KK(x)$ for $s$-dimensional homology is denoted by $D_s[\KK(x)]$. %It is a discrete measure on 
%\begin{equation}
% \Delta = \{\rD = (r_1,r_2)\in \R^2, \ 0\leq r_1< r_2 < \infty\}.
%\end{equation}

\begin{theorem}\label{thm:main_thm} Fix $n\geq 1$. Assume that $M$ is a real analytic compact $d$-dimensional connected submanifold possibly with boundary and that $\X$ is a random variable on $M^n$ having a density with respect to the Hausdorff measure $\mathcal{H}_{dn}$. Assume that $\KK$ satisfies the assumptions (K1)-(K5). Then, for $s \geq 0$, the expected measure $E[D_s[\KK(\X)]]$ has a density with respect to the Lebesgue measure on $\Delta$.
\end{theorem}

\begin{remark}The condition that $M$ is compact can be relaxed in most cases: it is only used to ensure that the subanalytic functions appearing in the proof satisfy the boundedness condition of Lemma \ref{lem:basic_analytic}. For the \v{C}ech and Rips-Vietoris filtrations, one can directly verify that the function $\varphi$ (and therefore the functions appearing in the proofs) satisfies it when $M= \R^d$. Indeed, in this case, the filtering functions are semi-algebraic.
\end{remark}

Classical filtrations such as the Rips-Vietoris and \v Cech filtrations do not satisfy the full set of assumptions (K1)-(K5). Specifically, they do not satisfy the second part of assumption (K5): all singletons $\{j\}$ are included at time $0$ in those filtrations so that $\varphi[\{j\}]\equiv 0$, and the gradient $\nabla \varphi[\{j\}]$ is therefore null everywhere. This leads to a well-known phenomenon on Rips-Vietoris and \v Cech diagrams: all the non-infinite points of the diagram for $0$-dimensional homology are included in the vertical line $\{0\}\times [0,\infty)$. A theorem similar to Theorem \ref{thm:main_thm} still holds in this case:

\begin{restatable}{theorem}{mainThmbisState}
\label{thm:main_thm'}Fix $n\geq 1$. Assume that $M$ is a real analytic compact $d$-dimensional connected submanifold and that $\X$ is a random variable on $M^n$ having a density with respect to the Hausdorff measure $\mathcal{H}_{dn}$. Define assumption (K5'):
\begin{enumerate}
\item[(K5')] The function $\varphi$ is subanalytic and the gradient of its entries $J$ of size larger than 1 is non vanishing a.s.e.. Moreover, for $\{j\}$ a singleton, $\varphi[\{j\}]\equiv 0$.
\end{enumerate}
 Assume that $\KK$ satisfies the assumptions (K1)-(K4) and (K5'). Then, for $s \geq 1$, $E[D_s[\KK(\X)]]$ has a density with respect to the Lebesgue measure on $\Delta$. Moreover, $E[D_0[\KK(\X)]]$ has a density with respect to the Lebesgue measure on the vertical line $\{0\}\times [0,\infty)$.
\end{restatable}
The proof of Theorem \ref{thm:main_thm'} is very similar to the proof of Theorem \ref{thm:main_thm}. It is therefore relegated to the appendix.

One can easily generalize Theorem \ref{thm:main_thm} and assume that the size of the point process $\X$ is itself random. For $n\in \mathbb{N}$, define a function $\varphi^{(n)} : M^n \to \R^{\FF_n}$ satisfying the assumption (K1)-(K5). If $x$ is a finite subset of $M$, define $\KK(x)$ by the filtration associated to $\varphi^{(|x|)}$ where $|x|$ is the size of $x$. We obtain the following corollary, proven in the appendix.

\begin{restatable}{corollary}{mainCorState}
\label{cor:main_cor} Assume that $\X$ has some density with respect to the law of a Poisson process on $M$ of intensity $\mathcal{H}_d$, such that $E\left[2^{|\X|}\right]<\infty$. Assume that $\KK$ satisfies the assumptions (K1)-(K5). Then, for $s \geq 0$, $E[D_s[\KK(\X)]]$ has a density with respect to the Lebesgue measure on $\Delta$.
\end{restatable}

The condition $E\left[2^{|\X|}\right]<\infty$ ensures the existence of the expected diagram and is for example satisfied when $\X$ is a Poisson process with finite intensity.

As the way the filtration is created is smooth, one may actually wonder whether the density of $E[D_s[\KK(\X)]]$ is smooth as well: it is the case as long as the way the points are sampled is smooth. Recalling that a function is said to be of class $C^k$ if it is $k$ times differentiable, with a continuous $k$th derivative, we have the following result.
\begin{restatable}{theorem}{smoothnessState}
\label{thm:smoothness} Fix $0\leq k\leq \infty$ and assume that $\X\in M^n$ has some density of class $C^k$ with respect to $\mathcal{H}_{nd}$. Then, for $s\geq 0$, the density of $E[D_s[\KK(\X)]]$ is of class $C^k$.
\end{restatable}

The proof is based on classical results of continuity under the integral sign as well as an use of the implicit function theorem: it can be found in the appendix.

As a corollary of Theorem \ref{thm:smoothness}, we obtain the smoothness of various expected descriptors computed on persistence diagrams. For instance, the expected birth distribution and the expected death distribution have smooth densities under the same hypothesis, as they are obtained by projection of the expected diagram on some axis. Another example is the smoothness of the expected Betti curves. The $s$th Betti number $\beta^r_s(\KK(x))$ of a filtration $\KK(x)$ is defined as the dimension of the $s$th homology group of $\KK(x,r)$. The Betti curves $r\mapsto \beta^r_s(\KK(x))$ are step functions which can be used as statistics, as in \cite{umeda2017time} where they are used for a classification task on time series. With few additional work (see proof in Appendix), the expected Betti curves are shown to be smooth.
\begin{restatable}{corollary}{corBettiState}
\label{cor:betti} Under the same hypothesis than Theorem \ref{thm:smoothness}, for $s\geq 0$, the expected Betti curve $ r\mapsto E[\beta^r_s(\KK(\X))]$ is a $C^k$ function.
\end{restatable}

\section{Proof of Theorem \ref{thm:main_thm}}\label{sec:proof}

First, one can always replace $M^n$ by $A(\varphi)=\bigcap_{J\in \FF_n} A(\varphi[J])$, as Lemma \ref{lem:basic_analytic} implies that it is an open set whose complement is in $\NN(M^n)$. We will therefore assume that $\varphi$ is analytic on $M^n$.

Given $x\in M^n$, the different values taken by $\varphi(x)$ on the filtration can be written $r_1 < \cdots < r_L$. Define $E_l(x)$ the set of simplices $J$ such that $\varphi[J](x) =r_l$. The sets $E_1(x),\dots,E_L(x)$ form a partition of $\mathcal{F}_n$ denoted by $\mathcal{A}(x)$. 

\begin{lemma}\label{lemma1} For a.s.e. $x\in M^n$, for $l\geq 1$,  $E_l(x)$ has a unique minimal element $J_l$ (for the partial order induced by inclusion).
\end{lemma}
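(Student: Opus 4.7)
The plan is to define $J_l := \bigcap_{J \in E_l(x)} J$ and show that, a.s.e., this intersection is non-empty and itself belongs to $E_l(x)$, which then exhibits it as the unique minimum of $E_l(x)$ under inclusion. Existence of some minimal element in a finite poset is automatic, so the content of the lemma lies in uniqueness, equivalent to showing that $E_l(x)$ is closed under pairwise intersection.

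The crucial step is the following claim: for every pair $J, J' \in \F_n$ with $J \not\subseteq J'$ and $J' \not\subseteq J$, the set
\begin{equation*}
S_{J,J'} := \{x \in M^n : \varphi[J](x) = \varphi[J'](x),\text{ and }\varphi[J \cap J'](x) < \varphi[J](x)\text{ if } J \cap J' \neq \emptyset\}
\end{equation*}
belongs to $\N(M^n)$. Granting this, for any $x$ outside the finite union $\bigcup_{J,J'} S_{J,J'}$, whenever $J_1, J_2 \in E_l(x)$ with neither contained in the other, $J_1 \cap J_2$ must be non-empty and satisfy $\varphi[J_1 \cap J_2](x) = r_l$; iterating over all such pairs shows that $J_l \in E_l(x)$.

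To prove the claim, I argue by contradiction: by Lemma~\ref{firstProp}, $S_{J,J'}$ is subanalytic, so it suffices to show it has empty interior. Suppose on the contrary that some non-empty open $U \subset S_{J,J'}$ exists. On $U$, (K1) makes $\varphi[J]$ a function of $x(J)$ alone and $\varphi[J']$ a function of $x(J')$ alone; the equality $\varphi[J] = \varphi[J']$ on $U$ then forces both to depend only on $x(J \cap J')$ throughout $U$. If $J \cap J' = \emptyset$, this renders $\varphi[J]$ constant on $U$, contradicting the non-vanishing of $\nabla \varphi[J]$ guaranteed by (K5). Otherwise, for each $j \in J \setminus J'$ the function $\varphi[J]$ does not depend on $x_j$ on $U$, so (K4) yields $\varphi[J] \equiv \varphi[J \setminus \{j\}]$ there; the latter, still equal to $\varphi[J]$ on $U$, still depends only on $x(J \cap J')$, and iterating the reduction leads to $\varphi[J] \equiv \varphi[J \cap J']$ on $U$, contradicting the strict inequality defining $S_{J,J'}$.

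The step I expect to require the most care is the iterated application of (K4): at each stage one must check that the intermediate simplex $J \setminus \{j_1, \dots, j_i\}$ still fails to depend on the next removable coordinate $x_{j_{i+1}}$ on the same open set $U$. This propagates from the chain of equalities $\varphi[J \setminus \{j_1, \dots, j_i\}] \equiv \varphi[J]$ on $U$ together with the fact that $\varphi[J]$ depends only on $x(J \cap J')$ there, but unwinding the argument cleanly while remaining inside the axioms (K1)--(K5) is the main bookkeeping difficulty of the proof.
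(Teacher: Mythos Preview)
Your proposal is correct and follows essentially the same approach as the paper: the paper defines, for each pair $J\neq J'$, the subanalytic set $C(J,J')=\{\varphi[J]=\varphi[J']\}\cap\{\varphi[J]>\varphi[J\cap J']\}$ (and $\{\varphi[J]=\varphi[J']\}$ when $J\cap J'=\emptyset$), shows it has empty interior via (K1), (K4) and (K5) exactly as you do, and concludes that off the union of these sets every $E_l(x)$ is closed under intersection. The only cosmetic difference is that the paper applies (K4) to all $j\in J\setminus J'$ in one sentence rather than spelling out the iteration; your more careful treatment of that bookkeeping is fine and fills in what the paper leaves implicit.
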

\begin{proof} Fix $J,J' \subset \{1,\dots,n\}$ with $J\neq J'$ and $J\cap J'\neq \emptyset$. consider the subanalytic functions $f: x\in M^n \mapsto \varphi[J](x)-\varphi[J'](x)$ and $g: x\in M^n \mapsto \varphi[J](x)-\varphi[J\cap J'](x)$. The set
\begin{equation}
C(J,J') \defeq \{ f=0 \} \cap \{ g>0\}.
\end{equation}
is a subanalytic subset of $M^n$. Assume that it contains some open set $U$. On $U$, $\varphi[J](x)$ is equal to $\varphi[J'](x)$. Therefore, it does not depend on the entries $x_j$ for $j \in J \backslash J'$. Hence, by assumption (K4), $\varphi[J](x)$ is actually equal to $\varphi[J\cap J'](x)$ on $U$. This is a contradiction with having $g>0$ on $U$. Therefore, $C(J,J')$ does not contain any open set, and all its points are singular: $C(J,J')$ is in $\mathcal{N}(M^n)$. If $J\cap J'=\emptyset$, similar arguments show that $C(J,J') = \{f=0\}$ cannot contain any open set: it would contradict assumption (K5). On the complement of 
\begin{equation}
C\defeq \bigcup_{J\neq J' \subset\{1,\dots,n\}} C(J,J'),
\end{equation}
having $\varphi[J](x) = \varphi[J'](x)$ implies that this quantity is equal to $\varphi[J\cap J'](x)$. This show the existence of a unique minimal element $J_l$ to $E_l(x)$ on the complement of $C$. This property is therefore a.s.e. satisfied. 
\end{proof}

\begin{lemma}\label{lemma2} A.s.e., $x \mapsto \mathcal{A}(x)$ is locally constant.
\end{lemma}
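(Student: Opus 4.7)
The plan is to exhibit a subanalytic bad set $B\in \mathcal{N}(M^n)$ outside of which the partition $\mathcal{A}$ is locally constant. The partition $\mathcal{A}(x)$ is entirely determined by the collection of pairwise equalities $\varphi[J](x)=\varphi[J'](x)$ for $J\neq J'$ in $\mathcal{F}_n$. Strict inequalities $\varphi[J](x)\neq \varphi[J'](x)$ automatically persist on an open neighborhood by continuity of $\varphi$, so local constancy can only fail at points where some equality is not stable under perturbation. The strategy is therefore to remove exactly those unstable points.

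For each pair $J\neq J' \in \mathcal{F}_n$, set $D(J,J') \defeq \{x\in M^n : \varphi[J](x)=\varphi[J'](x)\}$, which is subanalytic by Lemma \ref{firstProp}(iii) applied to $\varphi[J]-\varphi[J']$. Define $B(J,J') \defeq D(J,J') \setminus \operatorname{int}(D(J,J'))$, with interior taken in $M^n$, and let $B\defeq \bigcup_{J\neq J'} B(J,J')$. The key claim is that $B\in \mathcal{N}(M^n)$. Since $\dim M^n = dn$, any point of $D(J,J')$ that is smooth of dimension $dn$ has a whole neighborhood in $M^n$ contained in $D(J,J')$, hence lies in $\operatorname{int}(D(J,J'))$. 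Therefore $B(J,J')$ has no regular point, and as a finite union, $B$ again belongs to $\mathcal{N}(M^n)$.

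Now fix $x \notin B$. For each pair $J\neq J'$, there is an open neighborhood $U_{J,J'}$ of $x$ on which the indicator $\mathbf{1}\{\varphi[J]=\varphi[J']\}$ is constant: if $\varphi[J](x)\neq \varphi[J'](x)$, this follows by continuity, while if $\varphi[J](x)=\varphi[J'](x)$, then $x\in D(J,J')\setminus B(J,J') = \operatorname{int}(D(J,J'))$, and one can take $U_{J,J'}\subset D(J,J')$ so that $\varphi[J]\equiv\varphi[J']$ throughout $U_{J,J'}$. Intersecting over the finitely many pairs produces an open neighborhood of $x$ on which $\mathcal{A}\equiv \mathcal{A}(x)$. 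The only delicate point is the claim $B(J,J')\in \mathcal{N}(M^n)$, which rests solely on the equality between the $dn$-dimensional regular points of $D(J,J')$ and its interior in $M^n$; no further input beyond subanalyticity of $\varphi$ is used, so the refined assumptions (K4)--(K5) already invoked in Lemma \ref{lemma1} play no role here.
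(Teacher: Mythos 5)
Your proof is correct, and it takes a genuinely different route from the paper's. The paper fixes each candidate partition $\mathcal{A}_0$ of $\mathcal{F}_n$, encodes $\{x:\mathcal{A}(x)=\mathcal{A}_0\}$ as a single subanalytic set $C(\mathcal{A}_0)=\{F=0\}\cap\{G>0\}$ built from the minimal elements supplied by Lemma \ref{lemma1} (monotony (K3) makes each term of $F$ nonnegative, so $F=0$ forces blockwise equality), and then discards the finitely many singular parts, $\mathcal{A}$ being constant on each open set $\mathrm{Reg}(C(\mathcal{A}_0))$. You instead argue pairwise: the partition is determined by the equality pattern $\varphi[J]=\varphi[J']$, strict inequalities are stable by continuity, and the only obstruction is an equality holding at $x$ but on no neighborhood, i.e. $x\in D(J,J')\setminus\operatorname{int} D(J,J')$ --- which, since the paper's ``regular'' means smooth of full dimension, is exactly $\mathrm{Sing}(D(J,J'))$, the same kind of set the paper itself places in $\N(M^n)$. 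Your version is more economical (only the subanalyticity half of (K5) is used; Lemma \ref{lemma1}, (K3) and (K4) are not needed, as you note), while the paper's version yields as a by-product the explicit pieces $C(\mathcal{A}_0)$ whose regular parts become the open sets $U_1,\dots,U_R$ used afterwards in the proof of Theorem \ref{mainThm}; these are easily recovered from your bad set as well. Two minor points to tighten: the subanalyticity of $\operatorname{int}(D(J,J'))$, hence of $B(J,J')$, is not covered by Lemma \ref{firstProp} and rests on the standard stability of subanalytic sets under complement and interior (the same fact the paper uses implicitly when asserting $\mathrm{Sing}(C(\mathcal{A}_0))\in\N(M^n)$); and since the sequel needs the \emph{ordered} partition (the insertion order of simplices) to be locally constant, you should add the one-line remark that the finitely many strict inequalities between representatives of distinct blocks keep their sign on a possibly smaller neighborhood of $x$.
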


\begin{proof} Fix $\mathcal{A}_0 = \{E_1,\dots,E_l\}$ a partition of $\mathcal{F}_n$ induced by some filtration, with minimal elements $J_1,\dots,J_l$. 
Consider the subanalytic functions $F, G$ defined, for $x\in M^n$, by 
\[ F(x) = \sum_{l=1}^L \sum_{J\in E_l} \left( \varphi[J](x)-\varphi[J_l](x)\right) \mbox{  and  } G(x) = \sum_{l\neq l'} \left(\varphi[J_l](x))-\varphi[J_{l'}](x)\right)^2.\]
The set $\{x \in M^n, \mathcal{A}(x)=\mathcal{A}_0\}$ is exactly the set $C(\mathcal{A}_0)=\{F=0\} \cap \{G>0\}$, which is subanalytic. The sets $C(\mathcal{A}_0)$ for all partitions $\mathcal{A}_0$ of $\mathcal{F}_n$ define a finite partition of the space $M^n$. On each open set $\mbox{Reg}(C(\mathcal{A}_0)))$, the application $x\mapsto \mathcal{A}(x)$ is constant. Therefore, $x\mapsto \mathcal{A}(x)$ is locally constant everywhere but on $\bigcup_{\mathcal{A}_0} \mathrm{Sing}(C(\mathcal{A}_0)) \in \NN(M^n)$.
\end{proof}

Therefore, the space $M^n$ is partitioned into a negligeable set of $\NN(M^n)$ and some open subanalytic sets $U_1,\dots,U_R$ on which $\mathcal{A}$ is constant.

\begin{lemma}\label{lem:grad_non_null} Fix $1\leq r \leq R$ and assume that $J_1,\dots,J_L$ are the minimal elements of $\mathcal{A}$ on $U_r$. Then, for $1\leq l \leq L$ and $j\in J_l$, $\nabla^j \varphi[J_l] \neq 0$ a.s.e. on $U_r$.
\end{lemma}
\begin{proof} By minimality of $J_l$, for $j\in J_l$, the subanalytic set $\{\nabla^j \varphi[J_l] = 0\} \cap U_r$ cannot contain an open set. It is therefore in $\NN(M^n)$.
\end{proof}

Fix $1\leq r \leq R$ and write 
\[V_r = U_r\ \Big\backslash \left(\bigcup_{l=1}^L \bigcup_{j=1}^{|J_l|} \{ \nabla^j \varphi[J_l] = 0\}\right).\]
 The complement of $V_r$ in $U_r$ is still in $\NN(M^n)$. For $x\in V_r$, $D_s[\KK(x)]$ is written $\sum_{i=1}^N \delta_{\rD_i}$, where 
 \[\rD_i = (\varphi[J_{l_1}](x),\varphi[J_{l_2}](x))\eqdef (b_i,d_i).\] The integer $N$ and the simplices $J_{l_1}$, $J_{l_2}$ depend only on $V_r$. Note that $d_i$ is always larger than $b_i$, so that $J_{l_2}$ cannot be included in $J_{l_1}$. The map $x\mapsto \rD_i$ has it differential of rank 2. Indeed, take $j \in J_{l_2} \backslash J_{l_1}$. By Lemma \ref{lem:grad_non_null}, $\nabla^j \varphi[J_{l_2}](x) \neq 0$. Also, as $\varphi[J_{l_1}]$ only depends on the entries of $x$ indexed by $J_{l_1}$ (assumption (K1)), $\nabla^j \varphi[J_{l_1}](x)=0$. Furthermore, take $j'$ in $J_{l_1}$. By Lemma \ref{lem:grad_non_null}, $\nabla^{j'} \varphi[J_{l_1}](x)\neq 0$. This implies that the differential is of rank 2.

We now compute the $s$th persistence diagram for $s \geq 0$. Write $\kappa$ the density of $\X$ with respect to the measure $\mathcal{H}_{nd}$ on $M^n$. Then,
 \begin{align*}
 E[D_s[\KK(\X)]] &= \sum_{r=1}^R E\left[ \mathbbm{1}\{\X \in V_r\} D_s[\KK(\X)] \right] =\sum_{r=1}^R E\left[ \mathbbm{1}\{\X \in V_r\} \sum_{i=1}^{N_r} \delta_{\rD_i} \right]\\
 &= \sum_{r=1}^R  \sum_{i=1}^{N_r} E\left[ \mathbbm{1}\{\X \in V_r\}\delta_{\rD_i} \right]
 \end{align*}
Write $\mu_{ir}$ the measure $E[\mathbbm{1}\{\X \in V_r\} \delta_{\rD_i}]$. To conclude, it suffices to show that this measure has a density with respect to the Lebesgue measure on $\Delta$. This is a consequence of the coarea formula. Define the function $\Phi_{ir} : x\in V_r \mapsto \rD_i = (\varphi[J_{l_1}](x),\varphi[J_{l_2}](x))$. We have already seen that $\Phi_{ir}$ is of rank $2$ on $V_r$, so that $J\Phi_{ir}>0$. By the coarea formula (see Theorem \ref{thm:coarea}), for a Borel set $B$ in $\Delta$,
\begin{align*}
\mu_{ir}(B) = P(\Phi_{ir}(\X) \in B, \X \in V_r) &= \int_{V_r} \mathbbm{1}\{\Phi_{ir}(x) \in B\}\kappa(x)d\mathcal{H}_{nd}(x) \\
&=  \int_{u\in B} \int_{x\in \Phi_{ir}^{-1}(\{u\})} (J\Phi_{ir}(x))^{-1}\kappa(x) d\mathcal{H}_{nd-2}(x) du.
\end{align*}
Therefore, $\mu_{ir}$ has a density with respect to the Lebesgue measure on $\Delta$ equal to
\begin{equation}\label{densityLoc}
p_{ir}(u) = \int_{x\in \Phi_{ir}^{-1}(\{u\})} (J\Phi_{ir}(x))^{-1}\kappa(x) d\mathcal{H}_{nd-2}(x).
\end{equation}
Finally, $E[D_s[\KK(\X)]]$ has a density equal to
\begin{equation}\label{fullDens}
p(u) = \sum_{r=1}^R \sum_{i=1}^{N_r}  \int_{x\in \Phi_{ir}^{-1}(\{u\})} (J\Phi_{ir}(x))^{-1}\kappa(x) d\mathcal{H}_{nd-2}(x).
\end{equation}

\begin{remark}
Notice that, for $n$ fixed, the above proof, and thus the conclusion, of Theorem \ref{thm:main_thm} also works if the diagrams are represented by normalized discrete measures, i.e.~probability measures defined by
\begin{equation}
D_s = \frac{1}{|D_s|} \sum_{\rD \in D_s} \delta_{\rD}.
\end{equation}
\end{remark}

\section{Examples}\label{sec:examples}
We now note that the Rips-Vietoris and the \v{C}ech filtrations satisfy the assumptions (K1)-(K4) and (K5') when $M = \R^d$ is an Euclidean space. Note that the similar arguments show that weighted versions of those filtrations (see \cite{buchet2016efficient}) satisfy assumptions (K1)-(K5).

\subsection{Rips-Vietoris filtration}
For the Rips-Vietoris filtration, $\varphi[J](x) = \max_{i,j\in J} \|x_i-x_j\|$. The function $\varphi$ clearly satisfies (K1), (K2) and (K3). It is also subanalytic, as it is the maximum of semi-algebraic functions.

Let $x\in M^n$ and $J\in \mathcal{F}_n$ a simplex of size larger than one. Then, $\varphi[J](x)=\|x_i-x_j\|$ for some indices $i,j$. Those indices are locally stable, and $\varphi[J](x)=\varphi[\{i,j\}](x)$: hypothesis (K4) is satisfied. Furthermore, on this set, 
\begin{equation}
\nabla \varphi[\{i,j\}](x) = \left(\frac{x_i-x_j}{\|x_i-x_j\|},\frac{x_j-x_i}{\|x_i-x_j\|}\right) \neq 0.
\end{equation} 
Hence, (K5') is also satisfied: both Theorem \ref{thm:main_thm'} and Theorem \ref{thm:smoothness} are satisfied for the Rips-Vietoris filtration.

\subsection{\v Cech filtration}
The ball centered at $x$ of radius $r$ is denoted by $B(x,r)$. For the \v{C}ech filtration, 
\begin{equation}
 \varphi[J](x) = \inf_{r>0} \left\{ \bigcap_{j\in J} B(x_j,r) \neq \emptyset \right\}.
\end{equation}
First, it is clear that (K1), (K2) and (K3) are satisfied by $\varphi$.

We give without proof a characterization of the \v{C}ech complex.
\begin{proposition} Let $x$ be in $M^n$ and fix $J\in \mathcal{F}_n$. If the circumcenter of $x(J)$ is in the convex hull of $x(J)$, then $\varphi[J](x)$ is the radius of the circumsphere of $x(J)$. Otherwise, its projection on the convex hull belongs to the convex hull of some subsimplex $x(J')$ of $x(J)$ and $\varphi[J](x)=\varphi[J'](x)$.
\end{proposition}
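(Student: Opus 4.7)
The first move is to reformulate $\varphi[J](x)$ as the radius of the smallest enclosing ball (SEB) of the finite point set $x(J)$. Indeed, the condition $\bigcap_{j\in J} B(x_j,r) \neq \emptyset$ is equivalent to the existence of a center $c$ with $\max_{j\in J}\|c-x_j\|\leq r$, so $\varphi[J](x) = \min_{c} \max_{j\in J}\|c-x_j\|$. The function $c \mapsto \max_{j\in J}\|c-x_j\|^2$ is strictly convex and coercive, so the optimal center $c^\ast$ is unique and the optimal value $r^\ast = \varphi[J](x)$ is attained. This reduces the proposition to a geometric description of $c^\ast$.

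Next, I would invoke convex subdifferential calculus to identify the active set. Let $J^\ast = \{j\in J : \|c^\ast - x_j\| = r^\ast\}$. The optimality condition $0 \in \partial F(c^\ast)$ for $F(c)=\max_{j\in J}\|c-x_j\|^2$ reads $0 \in \mathrm{conv}\{c^\ast - x_j : j \in J^\ast\}$, i.e.\ there exist weights $\lambda_j \geq 0$, $\sum \lambda_j = 1$, with $c^\ast = \sum_{j\in J^\ast}\lambda_j x_j$. Combined with the fact that all distances $\|c^\ast - x_j\|$ for $j\in J^\ast$ are equal to $r^\ast$, this shows $c^\ast$ is the circumcenter of $x(J^\ast)$ and lies in $\mathrm{conv}(x(J^\ast))$.

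Now I would split into the two cases. If the circumcenter $c_{\mathrm{circ}}$ of $x(J)$ lies in $\mathrm{conv}(x(J))$, writing $c_{\mathrm{circ}} = \sum_j \lambda_j x_j$ and using that all $\|c_{\mathrm{circ}} - x_j\|$ coincide shows that $c_{\mathrm{circ}}$ satisfies the KKT condition above with $J^\ast = J$; by uniqueness $c^\ast = c_{\mathrm{circ}}$ and $\varphi[J](x)$ equals the circumradius, as claimed. Otherwise, the previous paragraph gives $J^\ast \subsetneq J$; applying Case~1 to $J^\ast$ yields $\varphi[J^\ast](x) = r^\ast = \varphi[J](x)$. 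To recover the projection formulation in the statement, let $c'$ be the metric projection of $c_{\mathrm{circ}}$ onto $\mathrm{conv}(x(J))$; since $c_{\mathrm{circ}}\notin \mathrm{conv}(x(J))$, $c'$ lies in the relative interior of some proper face $\mathrm{conv}(x(J'))$, and one verifies that this $J'$ may be taken equal to the active set $J^\ast$ by comparing the two minimization problems (the outward normals at $c'$ separate the dropped vertices $J\setminus J'$ from $c_{\mathrm{circ}}$, so those vertices are strictly interior to the ball of radius $r^\ast$ around $c^\ast$ and may be removed without changing the SEB).

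\textbf{Main obstacle.} The delicate part is the projection interpretation in Case~2: one needs to argue that the minimal face of $\mathrm{conv}(x(J))$ containing the projection of the circumcenter coincides with (or at least can be taken as) the active set at the optimum, so that the equality $\varphi[J](x)=\varphi[J'](x)$ holds. The Case~1/Case~2 dichotomy together with the optimality characterization is the classical Bădoiu--Clarkson/Welzl fact; translating it into the precise projection language used in the proposition is the bookkeeping that requires care.
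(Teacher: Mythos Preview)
The paper states this proposition explicitly \emph{without proof}, so there is no argument in the paper to compare against. Your route via the smallest-enclosing-ball reformulation and the KKT/subdifferential characterization of its center is the standard one and is correct: Case~1 and the existence of a strict sub-active-set $J^\ast$ in Case~2 are exactly right.

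The obstacle you flag---identifying the metric projection of $c_{\mathrm{circ}}$ onto $\mathrm{conv}(x(J))$ with the face $\mathrm{conv}(x(J^\ast))$---is not actually delicate; a two-line computation shows that the projection is $c^\ast$ itself, so $J'=J^\ast$ works directly. For any $j\in J$ expand
\[
R^2=\|c_{\mathrm{circ}}-x_j\|^2=\|c_{\mathrm{circ}}-c^\ast\|^2+\|c^\ast-x_j\|^2-2\langle c_{\mathrm{circ}}-c^\ast,\,x_j-c^\ast\rangle,
\]
so $\langle c_{\mathrm{circ}}-c^\ast,\,x_j-c^\ast\rangle=\tfrac12\bigl(\|c_{\mathrm{circ}}-c^\ast\|^2+\|c^\ast-x_j\|^2-R^2\bigr)$. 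Averaging over $j\in J^\ast$ with your KKT weights $\lambda_j$ annihilates the left side (since $\sum_{j\in J^\ast}\lambda_j x_j=c^\ast$), forcing $R^2=\|c_{\mathrm{circ}}-c^\ast\|^2+(r^\ast)^2$. Substituting back, the inner product is $0$ for $j\in J^\ast$ and $\le 0$ for $j\notin J^\ast$ (because there $\|c^\ast-x_j\|\le r^\ast$). This is exactly the variational inequality characterizing $c^\ast$ as the projection of $c_{\mathrm{circ}}$ onto $\mathrm{conv}(x(J))$; no separate ``comparison of minimization problems'' or normal-cone bookkeeping is needed.

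One small caveat worth stating explicitly: both the proposition and your argument tacitly assume the circumsphere of $x(J)$ exists and is unique, i.e.\ that the points $x(J)$ are affinely independent (so in particular $|J|\le d+1$). This is harmless for the paper's purposes, since the proposition is only invoked to verify subanalyticity and assumption~(K4) on a generic set.
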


\begin{definition} The Cayley-Menger matrix of a $k$-simplex $x=(x_1,\dots,x_k)\in M^k$ is the symmetric matrix $(M(x)_{i,j})_{i,j}$ of size $k+1$, with zeros on the diagonal, such that $M(x)_{1,j}=1$ for $j>1$ and $M(x)_{i+1,j+1} = \|x_i-x_j\|^2$ for $i,j\leq k$.
\end{definition}

\begin{proposition}[see \cite{coxeter1930circumradius}]Let $x \in M^k$ be a point in general position. Then, the Cayley-Menger matrix $M(x)$ is invertible with $(M(x))^{-1}_{1,1} = -2r^2$, where $r$ is the radius of the circumsphere of $x$. The $k$th other entries of the first line of $M(x)^{-1}$ are the barycentric coordinates of the circumcenter.
\end{proposition}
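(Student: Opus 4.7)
The plan is to guess an explicit first column for $M(x)^{-1}$, verify it against the linear system $M(x) y = e_1$ with $e_1 = (1, 0, \ldots, 0)^T$, and use invertibility of $M(x)$ to conclude (and, by symmetry of $M(x)$, to transfer the identification to the first row).

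For the candidate: in general position the points $x_1, \ldots, x_k$ are affinely independent, so the circumcenter $c$ exists and is uniquely written as $c = \sum_{j=1}^k \lambda_j x_j$ with $\sum_j \lambda_j = 1$; let $r = \|c - x_i\|$, which is independent of $i$. I take $y = (-2r^2, \lambda_1, \ldots, \lambda_k)^T$ and compute $M(x) y$ row by row. The first row gives $\sum_j \lambda_j = 1$. For the $(i+1)$-th row with $i \geq 1$, the entry is $-2r^2 + \sum_{j=1}^k \lambda_j \|x_i - x_j\|^2$, and it suffices to show that $\sum_j \lambda_j \|x_i - x_j\|^2 = 2r^2$. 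This follows from the expansion
\[
\|x_i - x_j\|^2 = \|x_i - c\|^2 - 2\langle x_i - c,\, x_j - c\rangle + \|x_j - c\|^2:
\]
summing against $\lambda_j$, the cross term vanishes because $\sum_j \lambda_j(x_j - c) = 0$, and the remaining two terms each contribute $r^2$, using $\sum_j \lambda_j = 1$ and $\|x_j - c\| = r$.

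Invertibility of $M(x)$ is the one step that is not pure linear algebra. It follows from the classical Cayley–Menger identity $\det M(x) = (-1)^k\, 2^{k-1}\, ((k-1)!)^2\, V(x)^2$, which expresses the determinant as a nonzero scalar multiple of the squared $(k-1)$-volume of the simplex spanned by $x_1, \ldots, x_k$; general position makes $V(x) > 0$, so $\det M(x) \neq 0$. Combining with $M(x) y = e_1$ identifies $y$ as the first column of $M(x)^{-1}$, whence $(M(x))^{-1}_{1,1} = -2r^2$ and the remaining entries are the barycentric coordinates $\lambda_j$; symmetry of $M(x)$ then transports the same identification to the first row. The main (and essentially only) obstacle is the invertibility claim, which I would simply invoke from the literature — consistent with the citation to \cite{coxeter1930circumradius} — rather than reprove the Cayley–Menger formula from scratch.
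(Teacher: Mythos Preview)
Your proof is correct. The verification that $M(x)y = e_1$ with $y = (-2r^2, \lambda_1, \ldots, \lambda_k)^T$ is clean and accurate, and invoking the Cayley--Menger determinant identity for invertibility is the standard move.

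For comparison: the paper does not actually prove this proposition. It is stated as a classical fact with a reference to \cite{coxeter1930circumradius} and used as a black box to deduce that the circumradius is an analytic function of the vertices (via the analyticity of matrix inversion applied to the polynomial entries of $M(x)$). So you have supplied more than the paper does here. Your argument is self-contained modulo the determinant formula, which --- as you note --- is exactly the kind of identity one expects to find in the cited source.
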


Therefore, the application which maps a simplex to its circumcenter is analytic, and the set on which the circumcenter of a simplex belongs in the interior of its convex hull is a subanalytic set. On such a set, the function $\varphi$ is also analytic, as it is the square root of the inverse a matrix which is polynomial in $x$. Furthermore, on the open set on which the circumcenter is outside the convex hull, we have shown that $\varphi[J](x)=\varphi[J'](x)$ for some subsimplex $J'$: assumption (K4) is satisfied.

Finally, let us show that assumption (K5') is satisfied. The previous paragraph shows the subanalyticity of $\varphi$. For $J\in \mathcal{F}_n$ a simplex of size larger than one, there exists some subsimplex $J'$ such that $\varphi[J](x)$ is the radius of the circumsphere of $x(J')$. It is clear that there cannot be an open set on which this radius is constant. Thus, $\nabla \varphi[J]$ is a.s.e. non null.

\section{The expected persistence diagram of a Brownian motion}\label{sec:brownian}
Another instance of random objects one can build filtrations on are random functions. The most fundamental instance of such functions is the Brownian motion $B: t\in[0,1] \mapsto B_t\in \R$, defined as the continuous Gaussian random field on $\R$ having covariance function $C(t_1,t_2)=\min(t_1,t_2)$ (see \cite[Chapter 2]{le2016brownian} for a concise and rigorous introduction). The continuity of $B$ ensures that the persistence module induced by the $0$-level homology of its sublevel sets is \emph{q-tame} \cite[Section 3.9]{chazal2016structure}. In particular, the persistence diagram $D$ of this persistence module is well-defined, but may contain accumulation points close to the diagonal. From a measure point of view, the persistence diagram is not a finite measure as in previous sections, but a Radon measure on $\Delta$. 

\begin{theorem}\label{thm:brownian} The random persistence diagram $D$ of the $0$-level homology of the sublevel sets of $B$ is such that its expectation $E[D]$ is well defined and has a density with respect to the Lebesgue measure.
\end{theorem}

The result holds as the persistent Betti numbers, defined by \[\beta^{r,s} \defeq D(]-\infty,r] \times [s,\infty[),\] have a particularly convenient expression in this setting. Indeed, $\beta^{r,s}$ is exactly the number of upward crossings of the band $[r,s]$ by the Brownian motion. The law of this quantity is explicitly known, and happens to be continuous with respect to $r$ and $s$. Standard measure theoretic arguments are then enough to conclude.

\begin{figure}[h]
    \centering
    \includegraphics[width=0.8\textwidth]{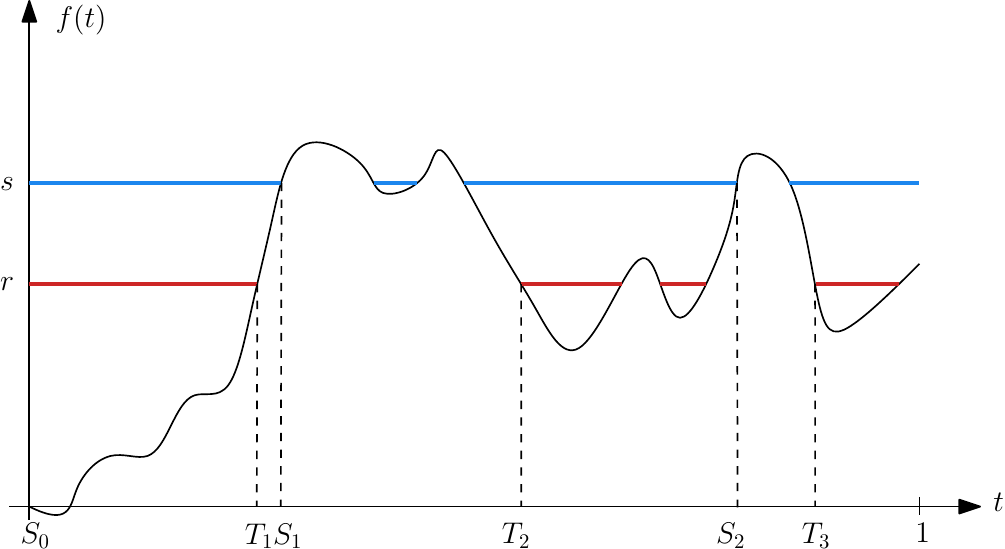}
    \caption{Example of a function $f:[0,1] \to \R$ with $\beta^{r,s}= 3$. The red region corresponds to $f^{-1}((-\infty,r])$ and the blue region to $f^{-1}((-\infty,s])$.}\label{fig:brownian}
  \end{figure}

More precisely, for $a\in \R$, define $T(a) \defeq \inf \{t>0,\ B_t=a\}$. Then, $T(a)$ has a density with respect to the Lebesgue measure equal to
\[ f_a(t) = \frac{a}{\sqrt{2\pi t^3}} \exp\left( -\frac{a^2}{2t}\right).\]
Assume that $0<r<s$ (similar arguments hold when both numbers are negative or if $r<0<s$). Define $T_0=S_0 = 0$ and for $i\geq 0$
\begin{align*}
 &T_{i+1} \defeq \inf \{ t\geq S_i,\ B_t = r\},\\
& S_{i+1} \defeq \inf\{ t \geq T_{i+1}, \ B_t =s\}.
\end{align*}
Then $\beta^{r,s}$ is equal to $\max\{ k\geq 0,\ T_k \leq 1\}$ (see Figure \ref{fig:brownian})  and $P(\beta^{r,s}\geq k)  = P(T_k \leq 1)$. First, note that $T_1$ is equal to $T(r)$. Also, by Markov property, for $i\geq 1$, conditionally on $S_i$, $T_{i+1}-S_i$ has the same law than $T(s-r)$, and so does $S_{i+1}-T_{i+1}$ conditionally on $T_{i+1}$. Therefore for $k\geq 2$,
\begin{align*}
P(\beta^{r,s}\geq k) &= P(T_k \leq 1)  = \int_{\Sigma_{2k-2}} f_r(t_1)f_{s-r}(s_1)f_{s-r}(t_2)\cdots f_{s-r}(s_{k-1})f_{s-r}(t_k) \dd s \dd t,
 \end{align*}
where $\Sigma_{2k-2} \defeq \{ u=(t_1,\dots,t_k,s_1,\dots,s_{k-1}) \in \R^{2k-1}, \ t_i\geq 0,\ s_i\geq 0 \mbox{ and } \sum_{i=1}^k t_i + \sum_{i=1}^{k-1} s_i \leq 1\}$ is the unit simplex of dimension $2k-2$.
 Therefore,
 \begin{align*} E[\beta^{r,s}] &= \sum_{k \geq 1}  P(\beta^{r,s} \geq k) = \sum_{k \geq 1} \int_{\Sigma_{2k-2}} f_r(t_1)f_{s-r}(s_1)f_{s-r}(t_2)\cdots f_{s-r}(s_{k-1})f_{s-r}(t_k) \dd s \dd t  \\
 &= \sum_{k\geq 1} \int_{\Sigma_{2k-2}} \frac{r(s-r)^{2k-2}}{\prod_{i=1}^{2k-1} \sqrt{2 \pi  u_i^3}} \exp\left( -\frac{r^2}{2u_1} - \frac{(s-r)^2}{2}\sum_{i=2}^{2k-1} u_i^{-1} \right) \dd u  \\
 &\defeq \sum_{k\geq 1} \int_{\Sigma_k} G_k(u;r,s)\dd u  \defeq \sum_{k \geq 1} I_k(r,s) .
 \end{align*}

Note first that this sum is finite. Indeed, for $b \geq 0$, the function $x\in [0,1] \mapsto x^{-1} + \frac{\ln(x)}{b}$ is bounded from below by $b^{-1}(1+\ln(b))$. Therefore,
\begin{align*} 
G_k(u;r,s) &\leq \frac{r(s-r)^{2k-2}}{(2\pi)^{k-1/2}} \exp \bigg( -\frac{r^2}{2} \frac{3}{r^2} \left(1 + \ln\left( \frac{r^2}{3} \right)\right) \\
&\hspace{2cm} -(2k-2)\frac{(s-r)^2}{2} \frac{3}{(s-r)^2} \left(1 + \ln\left( \frac{(s-r)^2}{3} \right)\right) \bigg)\\
&= \frac{r(s-r)^{2k-2}}{(2\pi)^{k-1/2}} \exp \left( -(2k-1) \frac{3}{2}(1-\ln 3) - \ln(r^3)-  (2k-2)\ln ((s-r)^3) \right) \\
&=\frac{r^{-2}(s-r)^{-4(k-1)}}{(2\pi)^{k-1/2}} BC^k
\end{align*}
for some constants $B,C$.
As the volume of $\Sigma_k$ is $\frac{\sqrt{k+1}}{k!}$, $\sum_{k\geq 0} I_k(r,s)$ is finite. Moreover, it is possible to find a local bound of $I_k(r,s)$ independent of $r$ and $s$: using classical results on the continuity of parametric integrals, one has that $E[\beta^{r,s}] = \mu(A_{r,s})$ is continuous in $r$ and $s$. Using the similar bounds on the derivatives of $I_k(r,s)$, one can show that $(r,s) \mapsto \mu(A_{r,s})$ is a $C^1$ function. This implies that $\mu$ is absolutely continuous with respect to the Lebesgue measure on $\Delta$.

\section{A few remarks on the stability of expected persistence diagrams}\label{sec:stability}
In two different situations, namely for point clouds in Section \ref{sec:statements} and for sublevels of functions in Section \ref{sec:brownian}, we have described how to define a map $P \mapsto E[D(\KK(\X))]$ where $\X$ has distribution $P$ and $P$ is a probability distribution on either $M^n$ or $C([0,1])$, the space of continuous functions defined on $[0,1]$. The continuity (or even Lipschitz-continuity) of such a map with respect to some metrics is a natural question, having both theoretical and practical implications: in particular, it implies the stability of mean linear representations, which justifies the use of such representations to perform statistical inference. We propose partial answers to this general question, with metrics measured with $L_1$ and $L_\infty$ distances.

\begin{theorem} Let $n\geq 1$ and $M$ be a real analytic compact $d$-dimensional connected submanifold. Let $\X_1$ (resp. $\X_2$) be a random variable on $M^n$ having a density $\kappa_1$ (resp. $\kappa_2$) with respect to the Hausdorff measure $\mathcal{H}_{dn}$. Assume that $\KK$ satisfies the assumptions (K1)-(K5) (or (K5')). Let $\overline{p}_1$ be the density of the \emph{normalized} measure $E\left[\frac{D_s[\KK(\X_1)]}{|D_s[\KK(\X_1)]|}\right]$ and $\overline{p}_2$ be the density of $E\left[\frac{D_s[\KK(\X_2)]}{|D_s[\KK(\X_2)]|}\right]$. Also, let $p_1$ and $p_2$ be the non-normalized densities.Then,
\begin{align}
\|\overline{p}_1-\overline{p}_2\|_1 &\leq \|\kappa_1-\kappa_2\|_1, \text{ and }\label{eq:bound_normalized}\\
\|p_1-p_2\|_1 &\leq C_n\mathcal{H}_d(M)^n\|\kappa_1-\kappa_2\|_\infty, \label{eq:bound_infty}
\end{align}
where $C_n$ is the expected number of points in the persistence diagram built with the filtration $\KK$ on $n$ i.i.d. uniform points on $M$.
\end{theorem}
It is conjectured (and even proved for $M=[0,1]^d$ in a parallel work \cite{divol2018choice}) that $C_n$ is of order $n$ when $\KK$ is either the Rips or the \v Cech filtration.

\begin{proof}
Consider first the non-normalized case. Given the expression \eqref{densityLoc}, one can write for $u\in \Delta$:
\begin{align*}
p_1(u) -p_2( u) &= \sum_{r=1}^R \sum_{i=1}^{N_r}  \int_{x\in \Phi_{ir}^{-1}(\{u\})} (J\Phi_{ir}(x))^{-1}(\kappa_1(x)-\kappa_2(x)) d\mathcal{H}_{nd-2}(x) \\
\int_\Delta |p_1(u)-p_2(u)|du &\leq  \sum_{r=1}^R \sum_{i=1}^{N_r}  \int_\Delta\int_{x\in \Phi_{ir}^{-1}(\{u\})} (J\Phi_{ir}(x))^{-1}|\kappa_1(x)-\kappa_2(x)| d\mathcal{H}_{nd-2}(x) \\
&= \sum_{r=1}^R \sum_{i=1}^{N_r}  \int_{V_r} \mathbbm{1}\{\Phi_{ir}(x) \in \Delta \} |\kappa_1(x)-\kappa_2(x)| d\mathcal{H}_{nd}(x) \text{ by the coarea formula} \\
&= \sum_{r=1}^R N_r \int_{V_r}  |\kappa_1(x)-\kappa_2(x)| d\mathcal{H}_{nd}(x) \\
&\leq \sum_{r=1}^R N_r \mathcal{H}_{nd}(V_r) \|\kappa_1-\kappa_2\|_\infty  \\
&= \mathcal{H}_{nd}(M^n) \sum_{r=1}^R N_r \frac{\mathcal{H}_{nd}(V_r)}{\mathcal{H}_{nd}(M^n)} \|\kappa_1-\kappa_2\|_\infty = \mathcal{H}_d(M)^n C_n  \| \kappa_1 - \kappa_2\|_\infty.
\end{align*}
Inequality \eqref{eq:bound_normalized} is likewise obtained.
\end{proof}

\begin{remark} Other metrics of interest on the space of persistence diagrams are Wasserstein metrics $d_p$, defined as the minimal cost of some matchings over the points of two diagrams. Endowed with those metrics, persistence diagrams are known to satisfy strong stability results with respect to the data they are built with (see \cite{cohen2010lipschitz}). It would therefore be expected that a similar stability holds for the expectation of random diagrams. However, the expected diagrams are not persistence diagrams, but Radon measures on $\Delta$. It is therefore first needed to extend $d_p$ to this more general space in a meaningful way. Similar technique to the one used in \cite{chazal2015subsampling} would then be sufficient to conclude. Extending the $d_p$ measures to Radon measures is the topic of a parallel work, see \cite{divol2019understanding}.
\end{remark}

\section{Persistence surface as a kernel density estimator}\label{sec:kde}
Persistence surface is a representation of persistence diagrams introduced by \cite{adams2017persistence}. It consists in a convolution of a diagram with a kernel, a general idea that has been repeatedly and fruitfully exploited, with slight variations, for instance in \cite{chen2015statistical, kusano2016persistence, reininghaus2015stable}.  For $K:\R^2\to \R$ a kernel and $H$ a bandwidth matrix (e.g. a symmetric positive definite matrix), let for $u\in \R^2$,
\begin{equation}
K_H(u) = \det(H)^{-1/2} K(H^{-1/2}\cdot u).
\end{equation}
For $D$ a diagram, $K : \R^2 \to \R$ a kernel, $H$ a bandwidth matrix and $w:\R^2 \to \R_+$ a weight function, one defines the persistence surface of $D$ with kernel $K$ and weight function $w$ by:
\begin{equation}
\forall u \in \R^2, \ \rho(D)(u) \defeq \sum_{\rD \in D} w(\rD)K_H(u-\rD) = D(wK_H(u-\cdot))
\end{equation}

Assume that $\X$ is some point process satisfying the assumptions of Theorem \ref{thm:main_thm}. Then, for $s\geq 1$, $\mu \defeq E[D_s[\KK(\X)]]$ has some density $p$ with respect to the Lebesgue measure on $\Delta$. Therefore, $\mu_w$, the measure having density $w$ with respect to $\mu$, has a density equal to $w\times p$ with respect to the Lebesgue measure. The mean persistence surface $E[\rho(D_s[\KK(\X)])]$ is exactly the convolution of $\mu_w$ by some kernel function: the persistence surface $\rho(D_s[\KK(\X)])$ is actually a kernel density estimator of $w\times p$.

If a point cloud approximates a shape, then its persistence diagram (for the \v{C}ech filtration for instance) is made of numerous points with small persistences and a few meaningful points of high persistences which corresponds to the persistence diagram of the "true" shape. As one is interested in the latter points, a weight function $w$, which is typically an increasing function of the persistence, is used to suppress the importance of the topological noise in the persistence surface. \cite{adams2017persistence} argue that in this setting, the choice of the bandwidth matrix $H$ has few effects for statistical purposes (e.g. classification), a claim supported by numerical experiments on simple sets of synthetic data, e.g. torus, sphere, three clusters, etc.

However, in the setting where the datasets are more complicated and contain no obvious "real" shapes, one may expect the choice of the bandwidth parameter $H$ to become more critical: there are no highly persistent, easily distinguishable points in the diagrams anymore and the precise structure of the density functions of the processes becomes of interest. We show that a cross validation approach allows the bandwidth selection task to be done in an asymptotically consistent way.
This is a consequence of a generalization of Stone's theorem \cite{stone1984asymptotically} when observations are not random vectors but random measures.

Assume that $\mu_1,\dots,\mu_N$ are i.i.d. random measures on $\R^2$, such that there exists a deterministic constant $C$ with $|\mu_1|\leq C$. Assume that the expected measure $E[\mu_1]$ has a bounded density $p$ with respect to the Lebesgue measure on $\R^2$. Given a kernel $K:\R^2\to \R$ and a bandwidth matrix $H$, one defines the kernel density estimator
\begin{equation}
\hat{p}_{H}(x) \defeq \frac{1}{N} \sum_{i=1}^N \int K_H(x-y)\mu_i(dy).
\end{equation}

The optimal bandwidth $H_{opt}$ minimizes the Mean Integrated Square Error (MISE)
\begin{equation}
MISE(H) \defeq E\left[\|p-\hat{p}_H\|^2\right] = E\left[ \int \left(p(x)-\hat{p}_H(x)\right)^2 dx\right].
\end{equation}
Of course, as $p$ is unknown, $MISE(H)$ cannot be computed. Minimizing $MISE(H)$ is equivalent to minimize $J(H) \defeq MISE(H)-\|p\|^2$. Define
\begin{equation}
\hat{p}_{iH}(x) \defeq \frac{1}{N-1} \sum_{j\neq i} \int K_H(x-y)\mu_j(dy)
\end{equation}
and
\begin{equation}\label{crit}
\hat{J}(H) \defeq \frac{1}{N^2} \sum_{i,j} \iint K_H^{(2)}(x-y)\mu_i(dx)\mu_j(dy) -\frac{2}{N} \sum_i \int \hat{p}_{iH}(x)\mu_i(dx),
\end{equation}
where $K^{(2)}: x \mapsto \int K(x-y)K(y)dy$ denotes the convolution of $K$ with itself. The quantity $\hat{J}(H)$ is an unbiased estimator of $J(H)$. The selected bandwidth $\hat{H}$ is then chosen to be equal to $\arg \min_H \hat{J}(H)$.

\begin{theorem}[Stone's theorem \cite{stone1984asymptotically}]\label{thm:stone}  Assume that the kernel $K$ is nonnegative, H\"older continuous and has a maximum attained in $0$. Also, assume that the density $p$ is bounded. Then, $\hat{H}$ is asymptotically optimal in the sense that
\begin{equation}
\frac{\|p-\hat{p}_{\hat{H}}\|}{\|p-\hat{p}_{H_{opt}}\|} \xrightarrow[N\to \infty]{} 1 \mbox{ a.s..}
\end{equation}
\end{theorem}
Note that the gaussian kernel $K(x) = \exp(-\|x\|^2/2)$ satisfies the assumptions of Theorem \ref{thm:stone}. 

The quality of the optimal estimator can also be studied. Indeed, a straightforward adaptation of the classical study of kernel density estimator (as presented for example in \cite{tsybakov2008introduction}) to the case of a sample of i.i.d. random measures shows that there exists a choice $H_N$ of bandwidth depending on $N$ and on the (unknown) regularity of $p$ such that the $\hat{p}_{H_N}$ is a consistent estimator of $p$ in the sense that $E[\|p-\hat{p}_{H_N}\|^2] \to 0$ (with known rate of convergence).  Therefore, Theorem \ref{thm:stone} asserts that the cross-validation procedure is consistent.

Let $\X_1,\dots,\X_N$ be i.i.d. processes on $M$ having a density with respect to the law of a Poisson process of intensity $\mathcal{H}_d$. Assume that there exists a deterministic constant $C$ with $|\X_i|\leq C$. Then, Theorem \ref{thm:stone} can be applied to $\mu_i = D_s[\KK(\X_i)]$. Therefore, \emph{the cross validation procedure \eqref{crit} to select $H$ the bandwidth matrix in the persistence surface ensures that the mean persistence surface}
\begin{equation}
\overline{\rho}_N \defeq \frac{1}{N} \sum_{i=1}^N \rho(D_s[\KK(\X_i)])
\end{equation}
\emph{is a consistent estimator of $p$ the density of $E[D_s[\KK(\X_1)]]$.} 

\section{Numerical illustration}\label{sec:num}

Three sets of synthetic data are considered (see Figure \ref{fig:datasets}). The first one (a) is made of $N=40$ sets of $n = 300$ i.i.d. points uniformly sampled in the square $[0,1]^2$. The second one (b) is made of $N$ samples of a clustered process: $n/3$ cluster's centers are uniformly sampled in the square. Each center is then replaced with $3$ i.i.d. points following a normal distribution of standard deviation $0.01\times n^{-1/2}$. The third dataset (c) is made of $N$ samples of $n$ uniform points on a torus of inner radius $1$ and outer radius $2$. For each set, a \v Cech persistence diagram for $1$-dimensional homology is computed. Persistence diagrams are then transformed under the map $(r_1,r_2) \mapsto (r_1,r_2-r_1)$, so that they now live in the upper-left quadrant of the plane. Figure \ref{fig:diagrams} shows the superposition of the diagrams in each class. One may observe the slight differences in the structure of the topological noise over the classes (a) and (b). The cluster of most persistent points in the diagrams of class (c) correspond to the two holes of a torus and are distinguishable from the rest of the points in the diagrams of the class, which form topological noise. The persistence diagrams are weighted by the weight function $w(\rD)=(r_2-r_1)^3$, as advised in \cite{kusano2017kernel} for two-dimensional point clouds. The bandwidth selection procedure will be applied to the measures having density $w$ with respect to the diagrams, e.g. a measure is a sum of weighted Dirac measures.

\begin{figure}
    \centering
    \begin{subfigure}[b]{0.3\textwidth}
        \includegraphics[width=\textwidth]{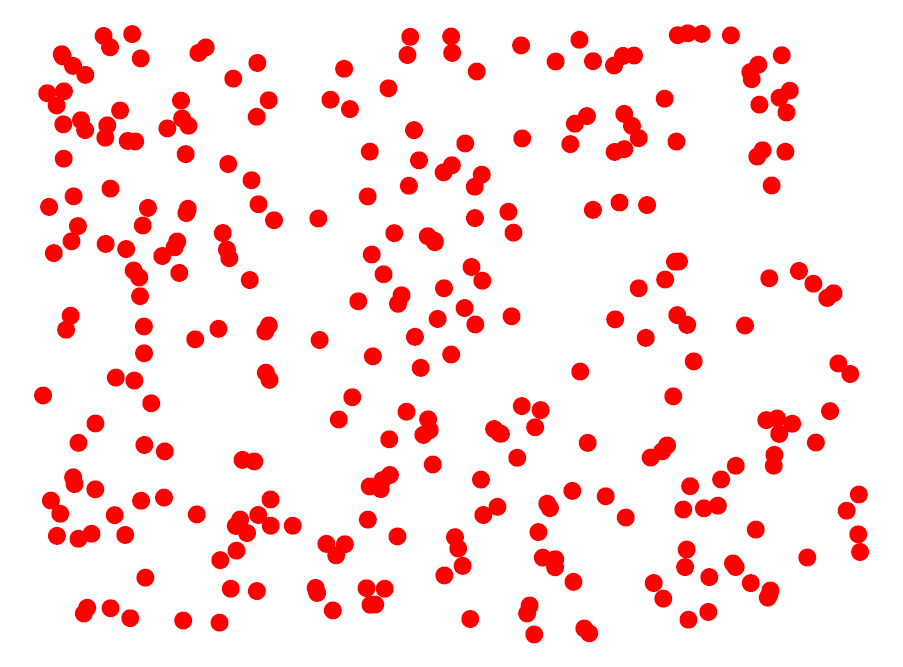}
        \caption{}
    \end{subfigure}
    \begin{subfigure}[b]{0.3\textwidth}
        \includegraphics[width=\textwidth]{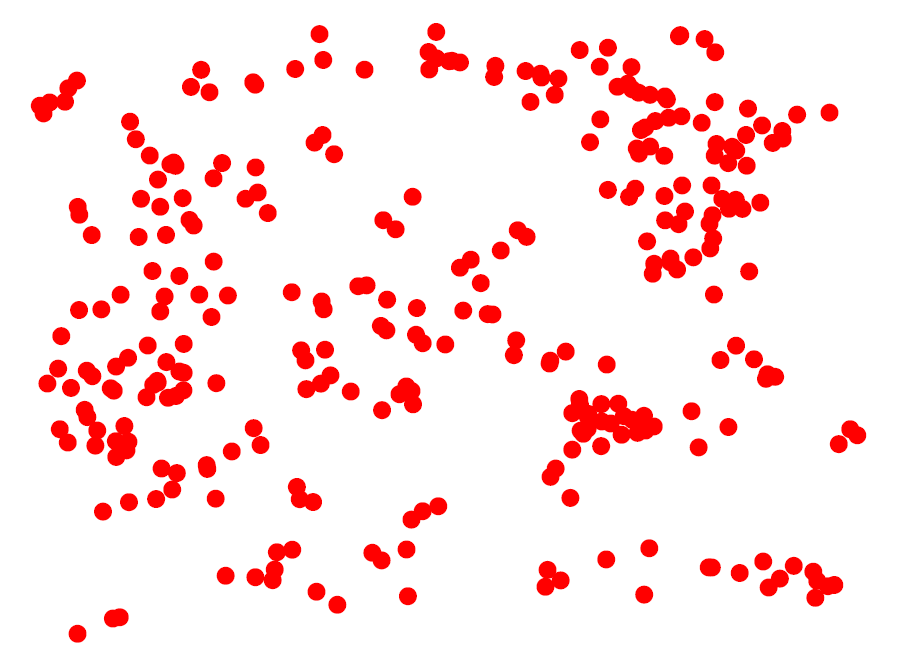}
        \caption{}
    \end{subfigure}
    \begin{subfigure}[b]{0.3\textwidth}
        \includegraphics[width=\textwidth]{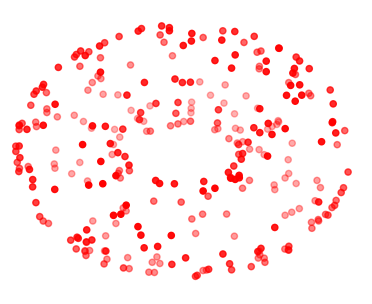}
        \caption{}
    \end{subfigure}
    \caption{Realization of the processes (a), (b) and (c) described in Section \ref{sec:num}.}\label{fig:datasets}
\end{figure}

\begin{figure}
    \centering
    \begin{subfigure}[b]{0.3\textwidth}
        \includegraphics[width=\textwidth]{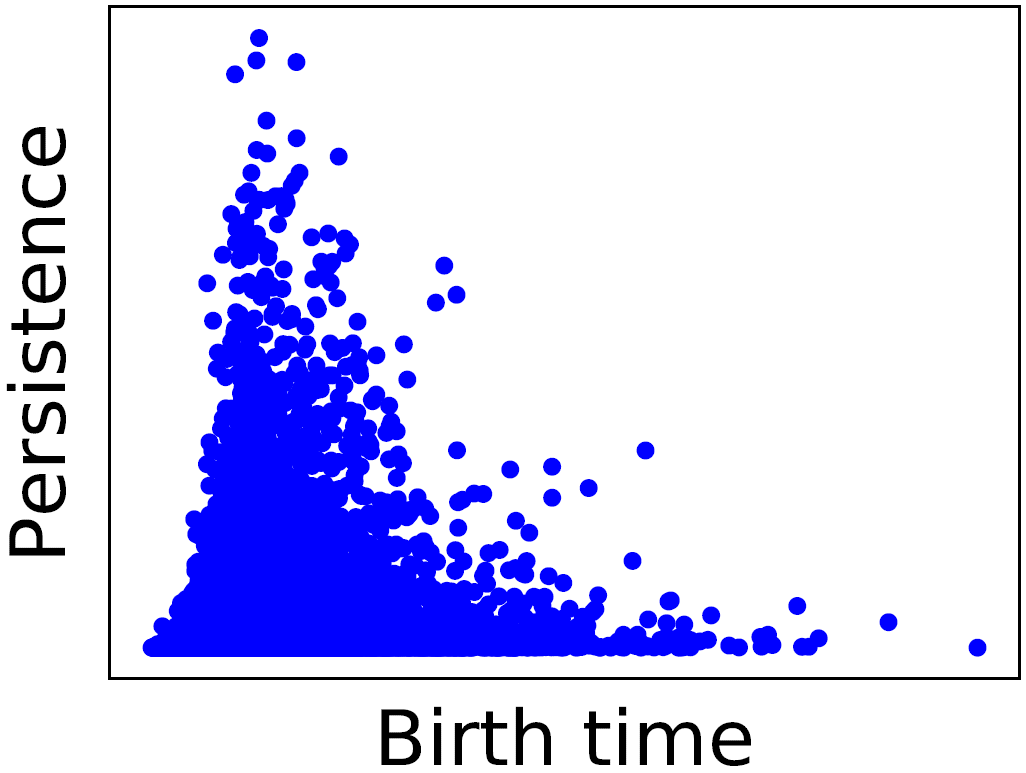}
        \caption{}
    \end{subfigure}
    \begin{subfigure}[b]{0.3\textwidth}
        \includegraphics[width=\textwidth]{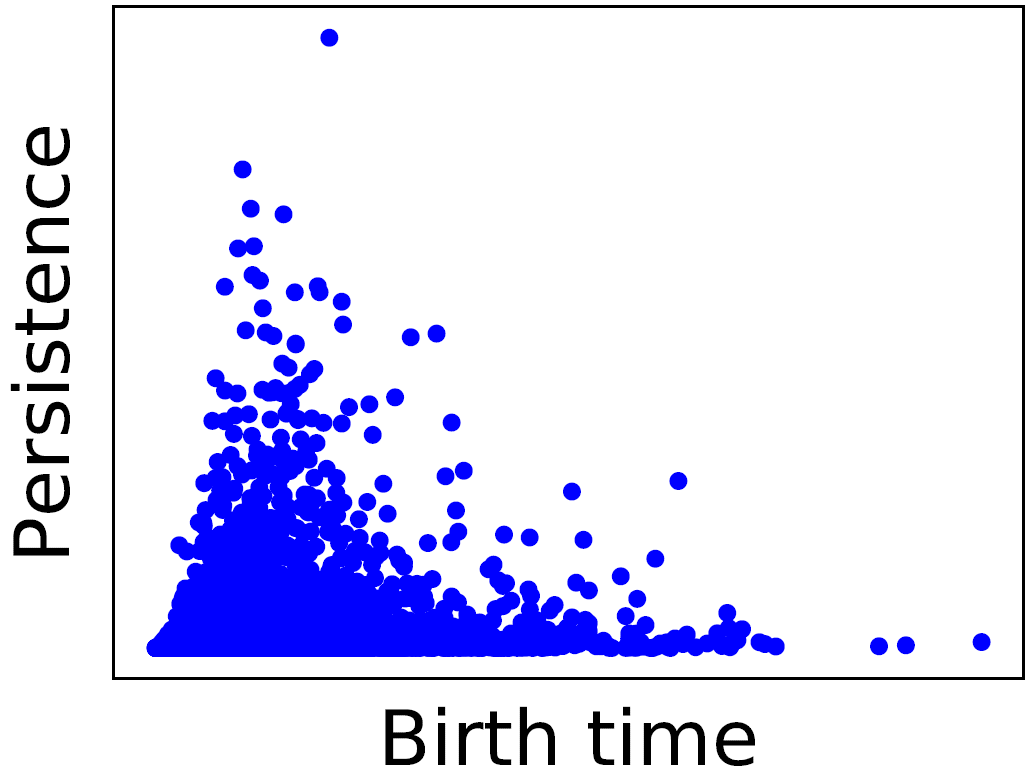}
        \caption{}
    \end{subfigure}
    \begin{subfigure}[b]{0.3\textwidth}
        \includegraphics[width=\textwidth]{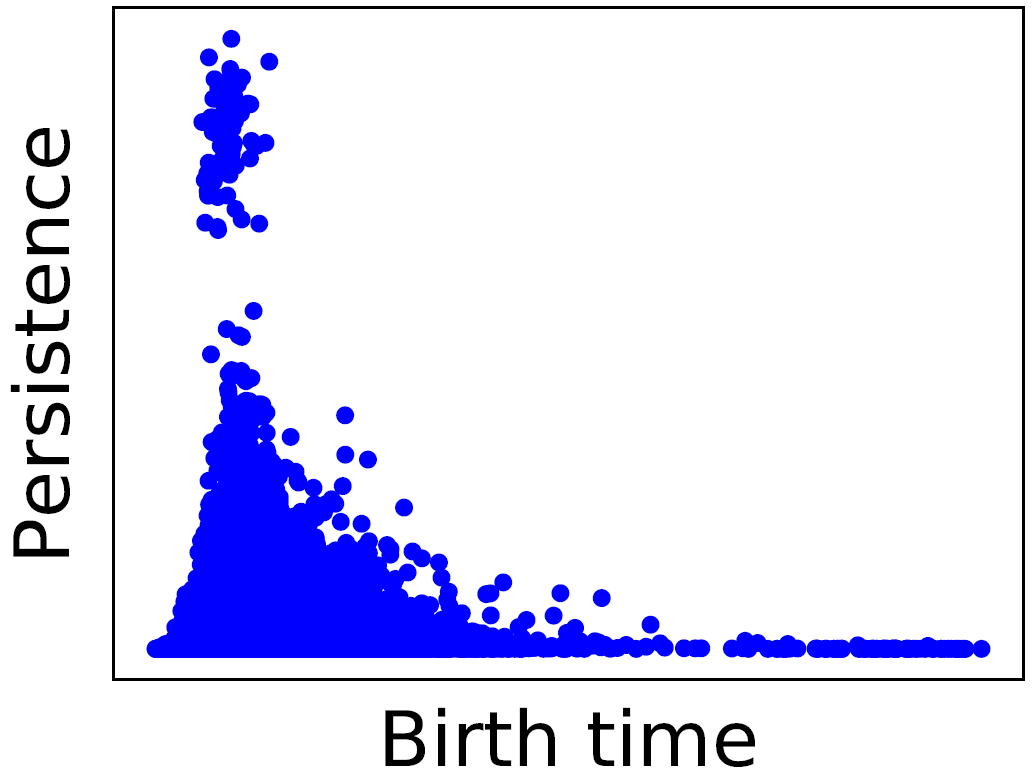}
        \caption{}
    \end{subfigure}
    \caption{Superposition of the $N=40$ diagrams of class (a), (b) and (c), transformed under the map $\rD \to (r_1,r_2-r_1)$.}\label{fig:diagrams}
\end{figure}
\begin{figure}
    \centering
    \begin{subfigure}[b]{0.3\textwidth}
        \includegraphics[width=\textwidth]{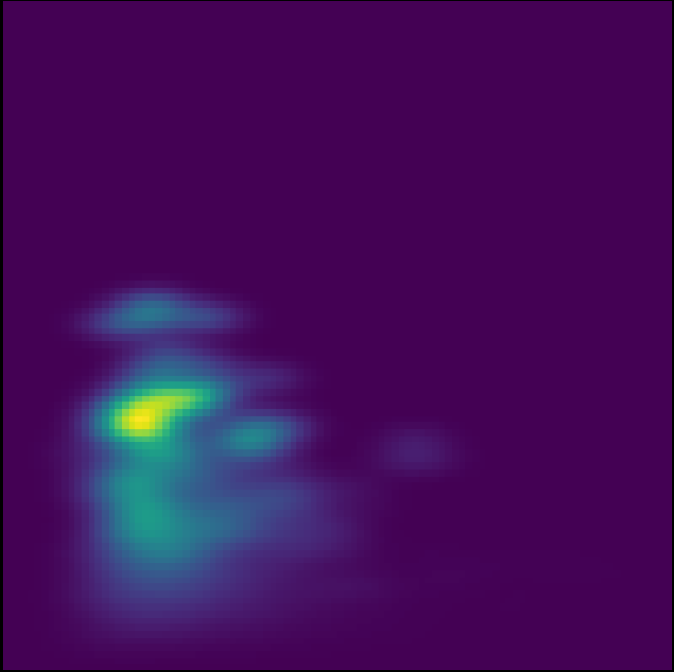}
        \caption{}
    \end{subfigure}
    \begin{subfigure}[b]{0.3\textwidth}
        \includegraphics[width=\textwidth]{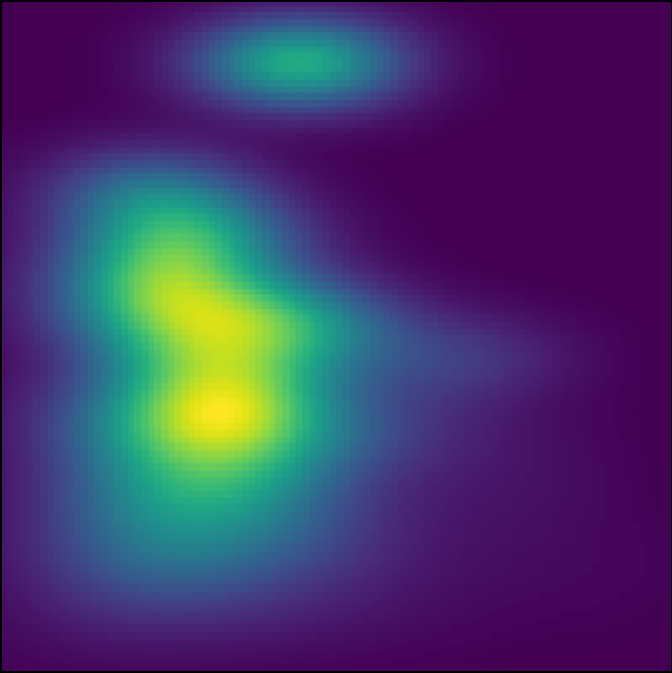}
        \caption{}
    \end{subfigure}
    \begin{subfigure}[b]{0.3\textwidth}
        \includegraphics[width=\textwidth]{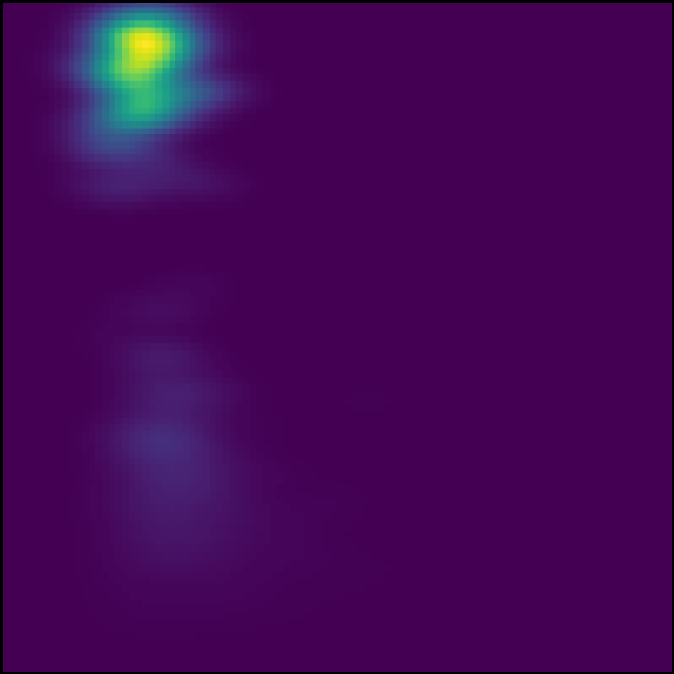}
        \caption{}
    \end{subfigure}
    \caption{Persistence surfaces for each class (a), (b) and (c), computed with the weight function $w(\rD)=(r_2-r_1)^3$ and with the bandwidth matrix selected by the cross-validation procedure.}\label{fig:surfaces}
\end{figure}
For each class of dataset, the score $\hat{J}(H)$ is computed for a set of bandwidth matrices of the form $h^2 \times \begin{bmatrix}
   1 & 0\\
   0 & 1
\end{bmatrix}$, for $50$ values $h$ evenly spaced on a log-scale between $10^{-5}$ and $1$. Note that the computation of $\hat{J}(H)$ only involves the computations of $K_H(\rD_1-\rD_2)$ for points $\rD_1$, $\rD_2$ in different diagrams. Hence, the complexity of the computation of $\hat{J}(H)$ is in $O(T^2)$, where $T$ is the sum of the number of points in the diagrams of a given class. If this is too costly, one may use a subsampling approach to estimate the integrals. The selected bandwidth were respectively $h=0.22, 0.60, 0.17$. Persistence surfaces for the selected bandwidth are displayed in Figure \ref{fig:surfaces}. The persistence of the "true" points of the torus are sufficient to suppress the topological noise: only two yellow areas are seen in the persistence surface of the torus. Note that the two areas can be separated, whereas it is not obvious when looking at the superposition of the diagrams, and would not have been obvious with an arbitrary choice of bandwidth. The bandwidth for class (b) may look to have been chosen too large. However, there is much more variability in class (b) than in the other classes: this phenomenon explains that the density is less peaked around a few selected areas than in class (a).

The cross-validation scheme has also been applied to non-synthetic data: the walk of 3 persons A, B and C, has been recorded using the accelerometer sensor of a smartphone
in their pocket, giving rise to 3 multivariate time series in $\R^3$. Using a sliding window, each series has been splited in a list of 10 times series made of 200 consecutive points. Using a time-delay embedding technique, those new time series are embedded into $\R^9$: these are the point clouds on which we build the Rips filtration. For each person, the set of 10 persistence diagrams is transformed under the map $(r_1,r_2)\mapsto (r_1,r_2-r_1)$. The persistence diagrams are weighted by the weight function $w(\rD)=(r_2-r_1)^3$. For each person, the scores $\hat{J}(H)$ are computed for a set of bandwidth matrix of the form $h^2 \times \begin{bmatrix}
   1 & 0\\
   0 & 1
\end{bmatrix}$, for $20$ values $h$ evenly spaced on a log-scale between $10^{-3}$ and $10^{-1}$. The selected bandwidths are $0.0089, 0.01833$ and $0.0089$ and the corresponding persistence images are displayed in Figure \ref{fig:surfaces_acc}. The three images show very distinct patterns: a reasonable machine learning algorithm will easily make the distinction between the three classes using the images as input.

\begin{figure}
    \centering
    \begin{subfigure}[b]{0.3\textwidth}
        \includegraphics[width=\textwidth]{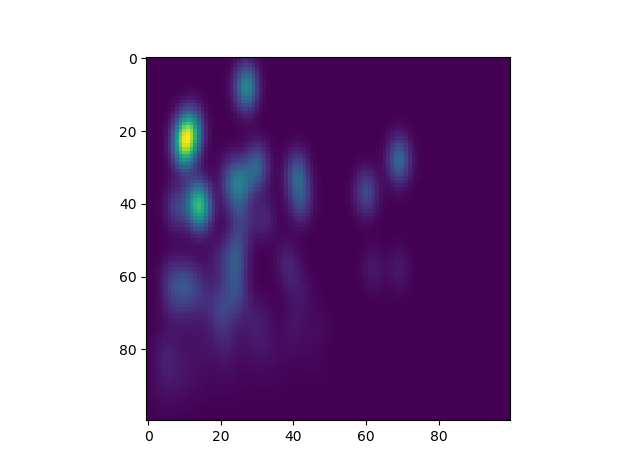}
        \caption{}
    \end{subfigure}
    \begin{subfigure}[b]{0.3\textwidth}
        \includegraphics[width=\textwidth]{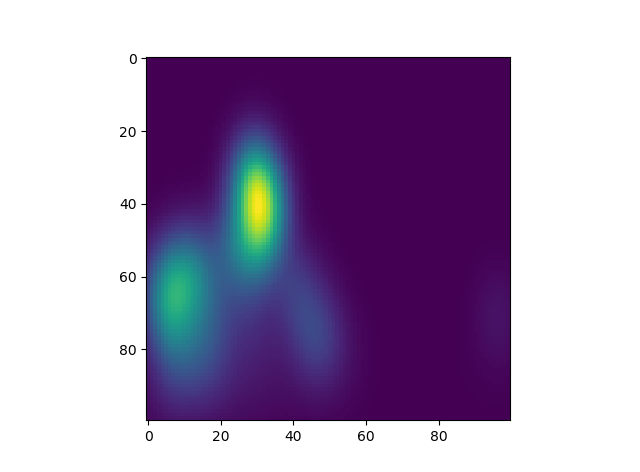}
        \caption{}
    \end{subfigure}
    \begin{subfigure}[b]{0.3\textwidth}
        \includegraphics[width=\textwidth]{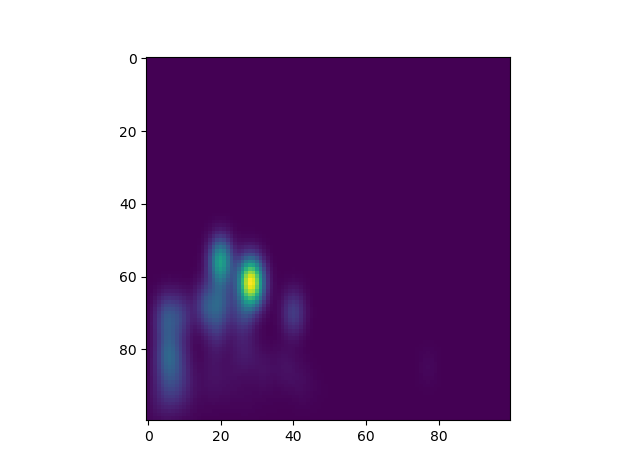}
        \caption{}
    \end{subfigure}
    \caption{Persistence surfaces for each person A,B and C, computed with the weight function $w(\rD)=(r_2-r_1)^3$ and with the bandwidth matrix selected by the cross-validation procedure.}\label{fig:surfaces_acc}
\end{figure}

\section{Conclusion and further works}
Taking a measure point of view to represent persistence diagrams, we have shown that the expected behavior of persistence diagrams built on top of random point sets reveals to have a simple and interesting structure: a measure on $\R^2$ with density with respect to Lebesgue measure that is as smooth as the random process generating the data points! This opens the door to the use of effective kernel density estimation techniques for the estimation of the expectation of topological features of data. Our approach and results also seem to be particularly well-suited to the use of recent results on the Lepski method for parameter selection \cite{lacour2016estimator} in statistics, a research direction that deserves further exploration. 
As many persistence-based features considered among the literature - persistence images, birth and death distributions, Betti curves,... - can be expressed as linear functional of the discrete measure representation of diagrams, our results immediately extend to them. The ability to select the parameters on which these features are dependent in a well-founded statistical way also opens the door to a well-justified usage of persistence-based features in further supervised and un-supervised learning tasks.

\bibliographystyle{alpha}
\bibliography{paper.bib}

\newcommand{\etalchar}[1]{$^{#1}$}
\begin{thebibliography}{CSEHM10}

\bibitem[AEK{\etalchar{+}}17]{adams2017persistence}
Henry Adams, Tegan Emerson, Michael Kirby, Rachel Neville, Chris Peterson,
  Patrick Shipman, Sofya Chepushtanova, Eric Hanson, Francis Motta, and Lori
  Ziegelmeier.
\newblock Persistence images: a stable vector representation of persistent
  homology.
\newblock {\em Journal of Machine Learning Research}, 18(8):1--35, 2017.

\bibitem[BCOS16]{buchet2016efficient}
Micka{\"e}l Buchet, Fr{\'e}d{\'e}ric Chazal, Steve~Y Oudot, and Donald~R
  Sheehy.
\newblock Efficient and robust persistent homology for measures.
\newblock {\em Computational Geometry}, 58:70--96, 2016.

\bibitem[BKS{\etalchar{+}}17]{bobrowski2017maximally}
Omer Bobrowski, Matthew Kahle, Primoz Skraba, et~al.
\newblock Maximally persistent cycles in random geometric complexes.
\newblock {\em The Annals of Applied Probability}, 27(4):2032--2060, 2017.

\bibitem[BM16]{biscio2016accumulated}
Christophe Biscio and Jesper Moller.
\newblock The accumulated persistence function, a new useful functional summary
  statistic for topological data analysis, with a view to brain artery trees
  and spatial point process applications.
\newblock {\em arXiv preprint arXiv:1611.00630}, 2016.

\bibitem[Bub15]{bubenik2015statistical}
Peter Bubenik.
\newblock Statistical topological data analysis using persistence landscapes.
\newblock {\em The Journal of Machine Learning Research}, 16(1):77--102, 2015.

\bibitem[CCSG{\etalchar{+}}09]{ccgmo-ghsssp-09}
F.~Chazal, D.~Cohen-Steiner, L.~J. Guibas, F.~Memoli, and S.~Y. Oudot.
\newblock Gromov-hausdorff stable signatures for shapes using persistence.
\newblock {\em Computer Graphics Forum (proc. SGP 2009)}, pages 1393--1403,
  2009.

\bibitem[CDSGO16]{chazal2016structure}
Fr{\'e}d{\'e}ric Chazal, Vin De~Silva, Marc Glisse, and Steve Oudot.
\newblock {\em The structure and stability of persistence modules}.
\newblock Springer, 2016.

\bibitem[CdSO14]{cso-psgc-13}
F.~Chazal, V.~de~Silva, and S.~Oudot.
\newblock Persistence stability for geometric complexes.
\newblock {\em Geometriae Dedicata}, 173(1):193--214, 2014.

\bibitem[CFL{\etalchar{+}}14]{chazal2014stochastic}
Fr{\'e}d{\'e}ric Chazal, Brittany~Terese Fasy, Fabrizio Lecci, Alessandro
  Rinaldo, and Larry Wasserman.
\newblock Stochastic convergence of persistence landscapes and silhouettes.
\newblock In {\em Proceedings of the thirtieth annual symposium on
  Computational geometry}, page 474. ACM, 2014.

\bibitem[CFL{\etalchar{+}}15]{chazal2015subsampling}
Fr{\'e}d{\'e}ric Chazal, Brittany Fasy, Fabrizio Lecci, Bertrand Michel,
  Alessandro Rinaldo, and Larry Wasserman.
\newblock Subsampling methods for persistent homology.
\newblock In {\em International Conference on Machine Learning}, pages
  2143--2151, 2015.

\bibitem[CGLM15]{chazal2014optimal}
Fr\'{e}d\'{e}ric Chazal, Marc Glisse, Catherine Labru{{\`e}}re, and Bertrand
  Michel.
\newblock Convergence rates for persistence diagram estimation in topological
  data analysis.
\newblock {\em Journal of Machine Learning Research}, 16:3603--3635, 2015.

\bibitem[Cox30]{coxeter1930circumradius}
HSM Coxeter.
\newblock The circumradius of the general simplex.
\newblock {\em The Mathematical Gazette}, pages 229--231, 1930.

\bibitem[CSEH07]{cohen2007stability}
David Cohen-Steiner, Herbert Edelsbrunner, and John Harer.
\newblock Stability of persistence diagrams.
\newblock {\em Discrete \& Computational Geometry}, 37(1):103--120, 2007.

\bibitem[CSEHM10]{cohen2010lipschitz}
David Cohen-Steiner, Herbert Edelsbrunner, John Harer, and Yuriy Mileyko.
\newblock Lipschitz functions have $l_p$-stable persistence.
\newblock {\em Foundations of computational mathematics}, 10(2):127--139, 2010.

\bibitem[CWRW15]{chen2015statistical}
Yen-Chi Chen, Daren Wang, Alessandro Rinaldo, and Larry Wasserman.
\newblock Statistical analysis of persistence intensity functions.
\newblock {\em arXiv preprint arXiv:1510.02502}, 2015.

\bibitem[DL19]{divol2019understanding}
Vincent Divol and Th{\'e}o Lacombe.
\newblock Understanding the topology and the geometry of the persistence
  diagram space via optimal partial transport.
\newblock {\em arXiv preprint arXiv:1901.03048}, 2019.

\bibitem[DP18]{divol2018choice}
Vincent Divol and Wolfgang Polonik.
\newblock On the choice of weight functions for linear representations of
  persistence diagrams.
\newblock {\em arXiv preprint arXiv:1807.03678}, 2018.

\bibitem[FLR{\etalchar{+}}14]{balakrishnan2013statistical}
B.~T. Fasy, F.~Lecci, A.~Rinaldo, L.~Wasserman, S.~Balakrishnan, A.~Singh,
  et~al.
\newblock Confidence sets for persistence diagrams.
\newblock {\em The Annals of Statistics}, 42(6):2301--2339, 2014.

\bibitem[HE17]{em-ph-17}
D.~Morozov H.~Edelsbrunner.
\newblock Persistent homology.
\newblock In {\em Handbook of Discrete and Computational Geometry (3rd Ed - To
  appear)}. CRC Press (to appear), 2017.

\bibitem[HST{\etalchar{+}}18]{duy2016limit}
Yasuaki Hiraoka, Tomoyuki Shirai, Khanh~Duy Trinh, et~al.
\newblock Limit theorems for persistence diagrams.
\newblock {\em The Annals of Applied Probability}, 28(5):2740--2780, 2018.

\bibitem[KFH17]{kusano2017kernel}
Genki Kusano, Kenji Fukumizu, and Yasuaki Hiraoka.
\newblock Kernel method for persistence diagrams via kernel embedding and
  weight factor.
\newblock {\em The Journal of Machine Learning Research}, 18(1):6947--6987,
  2017.

\bibitem[KH04]{kwak2004linear}
J.H. Kwak and S.~Hong.
\newblock {\em Linear Algebra}.
\newblock Birkh{\"a}user Boston, 2004.

\bibitem[KHF16]{kusano2016persistence}
Genki Kusano, Yasuaki Hiraoka, and Kenji Fukumizu.
\newblock Persistence weighted gaussian kernel for topological data analysis.
\newblock In {\em International Conference on Machine Learning}, pages
  2004--2013, 2016.

\bibitem[KK83]{kaup1983holomorphic}
Ludger Kaup and Burchard Kaup.
\newblock {\em Holomorphic functions of several variables: an introduction to
  the fundamental theory}, volume~3.
\newblock Walter de Gruyter, 1983.

\bibitem[KM{\etalchar{+}}13]{kahle2013limit}
Matthew Kahle, Elizabeth Meckes, et~al.
\newblock Limit the theorems for betti numbers of random simplicial complexes.
\newblock {\em Homology, Homotopy and Applications}, 15(1):343--374, 2013.

\bibitem[LG16]{le2016brownian}
Jean-Fran{\c{c}}ois Le~Gall.
\newblock {\em Brownian motion, martingales, and stochastic calculus}, volume
  274.
\newblock Springer, 2016.

\bibitem[LMR17]{lacour2016estimator}
Claire Lacour, Pascal Massart, and Vincent Rivoirard.
\newblock Estimator selection: a new method with applications to kernel density
  estimation.
\newblock {\em Sankhya A}, 79(2):298--335, 2017.

\bibitem[LT13]{ledoux2013probability}
Michel Ledoux and Michel Talagrand.
\newblock {\em Probability in Banach Spaces: isoperimetry and processes}.
\newblock Springer Science \& Business Media, 2013.

\bibitem[Mor16]{morgan2016geometric}
F.~Morgan.
\newblock {\em Geometric Measure Theory: A Beginner's Guide}.
\newblock Elsevier Science, 2016.

\bibitem[RHBK15]{reininghaus2015stable}
Jan Reininghaus, Stefan Huber, Ulrich Bauer, and Roland Kwitt.
\newblock A stable multi-scale kernel for topological machine learning.
\newblock In {\em Proceedings of the IEEE conference on computer vision and
  pattern recognition}, pages 4741--4748, 2015.

\bibitem[Shi97]{shiota1997geometry}
M.~Shiota.
\newblock {\em Geometry of Subanalytic and Semialgebraic Sets}.
\newblock Progress in mathematics. Springer, 1997.

\bibitem[Sto84]{stone1984asymptotically}
Charles~J Stone.
\newblock An asymptotically optimal window selection rule for kernel density
  estimates.
\newblock {\em The Annals of Statistics}, pages 1285--1297, 1984.

\bibitem[TMMH14]{turner2014frechet}
Katharine Turner, Yuriy Mileyko, Sayan Mukherjee, and John Harer.
\newblock Fr{\'e}chet means for distributions of persistence diagrams.
\newblock {\em Discrete \& Computational Geometry}, 52(1):44--70, 2014.

\bibitem[Tsy08]{tsybakov2008introduction}
A.B. Tsybakov.
\newblock {\em Introduction to Nonparametric Estimation}.
\newblock Springer Series in Statistics. Springer New York, 2008.

\bibitem[Ume17]{umeda2017time}
Yuhei Umeda.
\newblock Time series classification via topological data analysis.
\newblock {\em Transactions of the Japanese Society for Artificial
  Intelligence}, 32(3):D--G72\_1, 2017.

\bibitem[YA{\etalchar{+}}15]{yogeshwaran2015topology}
D~Yogeshwaran, Robert~J Adler, et~al.
\newblock On the topology of random complexes built over stationary point
  processes.
\newblock {\em The Annals of Applied Probability}, 25(6):3338--3380, 2015.

\bibitem[YSA17]{Yogeshwaran2017}
D.~Yogeshwaran, Eliran Subag, and Robert~J. Adler.
\newblock Random geometric complexes in the thermodynamic regime.
\newblock {\em Probability Theory and Related Fields}, 167(1):107--142, Feb
  2017.

\end{thebibliography}

\appendix
\section{Proofs of the subanalytic elementary lemmas}

\firstProp*
\begin{proof}
\begin{enumerate}
\item[(i)] Section I.2.1 in \cite{shiota1997geometry} states that $A(f)$ is subanalytic. Therefore, its complement $E$ is also subanalytic: it is enough to show that $E$ is of empty interior to conclude. 
\begin{claim} The set $F$ of points $x$ where $f$ is not analytic but $G_f$ is locally a real analytic manifold in $(x,f(x))$ is a subanalytic set of empty interior.
\end{claim}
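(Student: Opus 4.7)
The plan is to establish the two assertions of the claim—subanalyticity and empty interior of $F$—separately. For subanalyticity, I would work with the graph $G_f \subset M \times \R$, which is subanalytic because $f$ is. Its smooth locus $G_f^{\mathrm{sm}}$—the set of points where $G_f$ is locally a real analytic submanifold (of any dimension)—is a subanalytic set by standard stratification results for subanalytic sets (cf.\ Shiota). Projecting $G_f^{\mathrm{sm}}$ onto $M$ locally against compact slabs $K \times [-R,R]$ yields a subanalytic subset of $M$ by the definition, and part (i) has already shown that $M \setminus A(f)$ is subanalytic. Since $F = \pi(G_f^{\mathrm{sm}}) \setminus A(f)$, the set $F$ is subanalytic.

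For the empty interior, I would argue by contradiction: suppose $U \subset F$ is a nonempty open set. A subanalytic function is continuous on an open dense subset of its domain, so one can pick $x_0 \in U$ at which $f$ is continuous, and set $p_0 = (x_0, f(x_0))$. By definition of $F$, near $p_0$ the graph $G_f$ is a real analytic submanifold of some dimension $k$. Continuity of $f$ at $x_0$ makes $\pi$ a homeomorphism from a small piece of $G_f$ onto a neighborhood of $x_0$ in $M$, so $G_f$ has topological dimension $d$ near $p_0$ and thus $k = d$.

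Now I would consider the analytic projection $\pi : G_f \cap V \to M$ on a small neighborhood $V$ of $p_0$, a map between spaces of equal dimension $d$. By upper semicontinuity of rank, the maximal rank $r \leq d$ of $d\pi$ on $G_f \cap V$ is attained on an open subset $W$. If $r = d$, the analytic inverse function theorem makes $\pi|_W$ a local analytic diffeomorphism, and $f = \mathrm{pr}_\R \circ \pi|_W^{-1}$ is analytic on the open set $\pi(W) \subset U$—contradicting $U \cap A(f) = \emptyset$. If $r < d$, the analytic constant rank theorem on $W$ forces $\pi|_W$ to be non-injective, contradicting $G_f$ being the graph of a function. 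Either way one gets a contradiction, so $F$ has empty interior.

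The step I expect to require most care is the reduction to a continuity point of $f$ before the rank dichotomy: without continuity, the graph can have ``vertical'' pieces at $p_0$ that break the clean identification $k=d$, and the rank analysis does not directly apply. Once that reduction is in place, the rest (invariance of domain, rank stratification, analytic inverse/constant rank theorems) is classical. The subanalyticity part is relatively routine given that we only need $F$ to be locally subanalytic, so the local boundedness condition for projections is not a serious obstacle.
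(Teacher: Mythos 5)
Your proof is essentially correct, but it takes a genuinely different route from the paper's. The paper works directly at the points of the putative open set $U\subset F$: there $G_f$ is locally an analytic hypersurface of $M\times\R$, hence locally the zero set of an analytic function $\Phi$; after removing the locus $Z$ where the vertical derivative $\nabla^u\Phi$ vanishes (argued to be negligible precisely because $G_f$ is a graph), the analytic implicit function theorem gives an analytic $g$ with $\Phi(x,u)=0\Leftrightarrow u=g(x)$ locally, and since every zero of $\Phi$ there is a graph point, $f\equiv g$ is analytic --- a contradiction. This identification of $f$ with the implicit function is what lets the paper dispense with any regularity of $f$. You instead first pass to a continuity point of $f$ in $U$, use invariance of domain to see that the local dimension of the graph is $d$, and then run a rank dichotomy for the projection $\pi\colon G_f\cap V\to M$: full rank gives an analytic local inverse and hence analyticity of $f$ on an open subset of $U$, while rank $<d$ contradicts injectivity of $\pi$ on a graph via the constant rank theorem. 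That dichotomy is sound, up to two small repairs: rank is lower, not upper, semicontinuous (the maximal rank is still attained on an open set, which is all you need), and the homeomorphism onto a neighbourhood of $x_0$ requires $f$ to be continuous on a whole neighbourhood of $x_0$, which the dense open set of continuity points does provide, rather than continuity at the single point $x_0$. The real price of your route is the imported fact that a subanalytic function is continuous on a dense open set: it is standard and citable, but its usual proof runs through the same stratification machinery that yields Lemma~\ref{firstProp}(i) outright, whereas the paper deliberately quotes only the subanalyticity of $A(f)$; this continuity input is exactly what disposes of the delicate points of $F$ at which $G_f$ is locally an analytic manifold of dimension $<d$ (e.g.\ an isolated graph point above a discontinuity), points the paper's defining-equation argument never has to confront. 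In exchange, your argument makes explicit why the relevant local dimension is $d$ --- a step the paper's proof silently assumes when it describes $U_f$ by a single scalar equation $\Phi=0$.
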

\begin{claimproof} Assume $F$ contains an open set $U$. Replacing $U$ by a smaller open set if necessary, there exists some local parametrization of $U_f = \{(x,f(x)),\ x\in U\}$ by some analytic function $\Phi : V \to \R$, $V$ being a neighborhood of $U_f$ in $M\times \R$. Denote by $\nabla^u \Phi \in \R$ the gradient of $\Phi$ with respect to the real variable $u\in \R$. The set $Z$ on which $\nabla^u \Phi = 0$ is an analytic subset of $V$. As $G_f$ is the graph of a function, $Z \cap G_f$ is made of isolated points: one can always assume that those points are not in $U_f$. Therefore, there exists some neighborhood $V'$ of $U_f$ which does not intersect $Z$. One can now apply the analytic implicit function theorem (see for instance \cite[Section 8]{kaup1983holomorphic}) anywhere on $U_f$: for $(x_0,u_0)\in U_f$, there exists some neighborhood $W \subset V'$ and an analytic function $g:\Omega\to \R$, $\Omega$ being a neighborhood of $x_0$, such that, on $W$
\[ \Phi(x,u) = 0 \Longleftrightarrow u=g(x).\]
As we also have $\Phi(x,u)=0$ if and only if $u=f(x)$, $f\equiv g$ on $\Omega$ and $f$ is analytic on $\Omega$. This is a contradiction with having $f$ not analytic in every point of $U$.
\end{claimproof}

Now, the set $E$ is the union of $F$ and of $E\cap G$ where $G$ is the projection on $M$ of $\mathrm{Sing}(G_f)$. As, by definition, $\mathrm{Sing}(G_f)$ is of empty interior, $G$ is also of empty interior. Therefore, $E$ is of empty interior, which is equivalent to say that its dimension is smaller than $d$.
\item[(i)] See \cite[Section II.1.1]{shiota1997geometry}.
\item[(ii)] See \cite[Section II.1.6]{shiota1997geometry}.
\end{enumerate}
\end{proof}
\usefulLem*
\begin{proof} Write $k$ the dimension of $X$. First, one can always assume that $X$ is closed, as $\mathcal{H}_d(\overline{X}) \geq \mathcal{H}_d(X)$. Therefore, there exists some real analytic manifold $N$ of dimension $k$ and a proper real analytic mapping $\Psi : N \to M$ such that $\Psi(N)=X$ (see \cite[Section I.2.1]{shiota1997geometry}). The set $X$ can be written as the union of some compact sets $X_K$ for $K\geq 0$. It is enough to show that $\mathcal{H}_d(X_K) = 0$. The set $X_K$ can be written $\Psi(\Psi^{-1}(X_K))$, where $\Psi^{-1}(X_K)$ is some compact subset of $N$. We have $\mathcal{H}_d(\Psi^{-1}(X_K)) = 0$ because $N$ is of dimension $k<d$. Furthermore, as $\Psi$ is analytic on $Y$, it is Lipschitz on $\Psi^{-1}(X_K)$. Therefore, $\mathcal{H}_d(\Psi(\Psi^{-1}(X_K)))=\mathcal{H}_d(X_K)$ is also null.
\end{proof}

\section{Proof of Theorem \ref{thm:main_thm'}}
\mainThmbisState*
We indicate how to change the proof of Theorem \ref{thm:main_thm} when assumption (K5') is satisfied instead of assumption (K5). In the partition $E_1(x),\dots,E_L(x)$ of $\mathcal{F}_n$, the set $E_1(x)$ plays a special role: it corresponds to the value $r_1=0$ and contains all the singletons, which satisfy $\varphi[\{j\}] \equiv 0$ by assumption. Lemma \ref{lemma1} holds for $l>1$ and one can always define $J_1 = \{1\}$ to be a minimal element of $E_1(x)$. With this convention in mind, it is straightforward to check that Lemma \ref{lemma2} still holds and that Lemma \ref{lem:grad_non_null} is satisfied as well for $l>1$. Now, one can define in a likewise manner the sets $V_r$. For $x\in V_r$, the diagram $D_s[\KK(x)]$ is still decomposed $\sum_{i=1}^N \delta_{\rD_i}$, with $\rD_i = (\varphi[J_{l_1}](x),\varphi[J_{l_2}](x))$. If $s>0$, the end of the proof is similar. However, for $s=0$, the pairs of simplices $(J_{l_1},J_{l_2})$ are made of one singleton $J_{l_1}$ and of one 2-simplex $J_{l_2}$. As $\varphi$ is null on singletons, the points in this diagram are all included in the vertical line $L_0 \defeq \{0\} \times [0,\infty)$. The map $\Phi_{ir}:x\in V_r\mapsto \rD_i \in L_0$ has a differential of rank 1, as Lemma \ref{lem:grad_non_null} ensures that $\nabla^j \varphi[J_{l_2}](x) \neq 0$ for $j \in J_{l_2}$. One can apply the coarea formula to $\Phi_{ir}$ to conclude to the existence of a density with respect to the Lebesgue measure on $L_0$.

\section{Proof of Corollary \ref{cor:main_cor}}
\mainCorState*
The diagram $D_s[\KK(\X)]$ can be written
\begin{equation}
D_s[\KK(\X)] = \sum_{n\geq 0} \mathbbm{1}\{|\X|=n\} D_s[\KK(\X)],
\end{equation}
and Theorem \ref{thm:main_thm} states that $\mathbbm{1}\{|\X|=n\} D_s[\KK(\X)]$ has a density $p_n$ with respect to the Lebesgue measure on $\Delta$. Take $B$ a Borel set in $\Delta$:
\begin{align*}
E[D_s[\KK(\X)]](B) &= \sum_{n\geq 0} E[\mathbbm{1}\{|\X|=n\} D_s[\KK(\X)]](B) \\
&= \sum_{n\geq 0} \int_B p_n = \int_B \sum_{n\geq 0} p_n \mbox{ by Fubini-Torelli's theorem.} 
\end{align*}
It is possible to use Fubini-Torelli's theorem because $E[D_s[\KK(\X)]](B)$ is finite. Indeed, as $D_s[\X]$ is always made of less than $2^{|\X|}$ points, and as we have supposed that $E\left[ 2^{|\X|}\right]<\infty$, the measure $E[ D_s[\KK(\X)]]$ is finite as well.

\section{Proof of Theorem \ref{thm:smoothness}}
\smoothnessState*
Given the expression \eqref{densityLoc}, it is sufficient to show that integrating a function along the fibers is a smooth operation in the fibers. We only show that the density is continuous. Continuity of the higher orders derivatives is obtained in a similar fashion. The proof is a standard application of the implicit function theorem.

Using the same notations than in the proof of Theorem \ref{thm:main_thm}, fix $1\leq r \leq R$ and $1\leq i \leq N_r$. We will show that $p_{ir}$ is continuous. As the indices $r$ and $i$ are now fixed, we drop the dependency in the notation: $V \defeq V_r$ and $\Phi \defeq \Phi_{ir}$. By using a partition of unity and taking local diffeomorphisms, one can always assume that $V \subset \R^{nd}$. Define the function $f: (x,u)\in V\times\Delta \mapsto \Phi(x)-u \in \R^2$. We have already shown in the proof of Theorem \ref{thm:main_thm} that for $x_0\in V$, there exists two indices $a_1$ and $a_2$ (depending on $x_0$) such that the minor $M(x_0)=(D \Phi(x_0))_{a_{1,2}} $ is invertible. Rewrite $x \in V$ in $(y,z)$ where $z = (x_{a_1},x_{a_2})\in \R^2$. By the implicit function theorem, for $(x_0,u_0)$ such that $f(x_0,u_0)=0$, there exists a neighborhood $\Omega_{x_0} \subset V\times \Delta$ of $(x_0,u_0)$ and an analytic function $g_{x_0}: W_{y_0}\times Y_{u_0} \to \R^2$ defined on a neighborhood of $(y_0,u_0)$ such that for $(x,u)\in \Omega_{x_0}$
\[ f(x,u) = 0 \Longleftrightarrow z=g_{x_0}(y,u).\]
The sets $(\Omega_{x_0})_{x_0 \in V}$ constitutes an open cover of the fiber $f^{-1}(0)$. Consider a smooth partition of unity $(\rho_{x_0})_{x_0 \in V}$ subordinate to this cover. Then, for all $(x,u) \in f^{-1}(0)$
\begin{align*}
(J\Phi(x))^{-1}\kappa(x) &= \sum_{x_0 \in V} \rho_{x_0}(y,u,g_{x_0}(y,u)) (J\Phi(y,g_{x_0}(y,u)))^{-1}\kappa(y,g_{x_0}(y,u))
\end{align*}
Therefore,
\begin{align}
p_{ir}(u)&=\int_{x\in \Phi^{-1}(u)} (J\Phi(x))^{-1}\kappa(x) d\mathcal{H}_{nd-2}(x)  \nonumber \\
&=\sum_{x_0 \in V} \int_{y \in W_{y_0}}\rho_{x_0}(y,u,g_{x_0}(y,u)) (J\Phi(y,g_{x_0}(y,u)))^{-1}\kappa(y,g_{x_0}(y,u))dy. \label{sumPart}
\end{align}

We are now faced with a classical continuity under the integral sign problem. First, the Cauchy-Binet formula (see \cite[Example 2.15]{kwak2004linear}) states that $J\Phi$ is equal to the square root of the sum of the squares of the determinants of all $2\times 2$ minors of $D\Phi$. Therefore, $J\Phi(x)$ is larger than the determinant of $M(x)$, the minor of $f$ of indices $a_1$ and $a_2$. The implicit function theorem gives the exact value of $M(x)$. Indeed, for $X=(x,u)\in \Omega_{x_0}$, and for any index $k$,
\begin{equation}
\frac{\partial g}{\partial X_k}(y,u)= -\left(M^{-1} \cdot \frac{\partial f}{\partial X_k}\right)(y,u,g(y,u))
\end{equation}
Take $X_k=u_{1,2}$. Then, $\partial f/\partial X_k = (-1,0)$, resp. $(0,-1)$. Therefore,
\begin{equation}\label{boundJac}
 M^{-1}(y,u,g(y,u)) = \frac{\partial g}{\partial u}(y,u,g(y,u))
\end{equation}

As $\rho_{x_0}$ has a compact support, it suffices to show that the integrand is bounded by a constant independent of $u$. The only issue is that $(J\Phi)^{-1}$ may diverge. Equation \eqref{boundJac} shows that it is bounded by $\det \partial g/\partial u$. This is bounded, as $g$ is analytic on the compact support of $\rho_{x_0}$: each term in the sum \eqref{sumPart} is continuous. By the compactness of $M$ and $f^{-1}(0)$, all the partitions of unity can be taken finite, and a finite sum of continuous functions is continuous. This proves the continuity of $p$.

\section{Proof of Corollary \ref{cor:betti}}
\corBettiState*
Define $f(r,u)$ to be equal to $1$ if $u_1\leq r \leq u_2$ and $0$ otherwise. Then, $\beta^r_s(\KK(\X))$ is equal to $D_s[\KK(\X)](f(r,\cdot))$. Therefore, the expectation $E[\beta^r_s(\KK(\X))]$ is equal to
\begin{equation}
\int p(u)f(r,u)du.
\end{equation}
As we assumed that the hypothesis of Theorem \ref{thm:smoothness} were satisfied, the density $p$ is smooth. Moreover, $p(u)f(r,u)$ is smaller than $p(u)$. The function $p$ being integrable, one can apply the continuity under the integral sign theorem to conclude that $r\mapsto E[\beta^r_s(\KK(\X))]$ is continuous. Higher-order derivatives are obtained in a similar fashion.

\end{document}